\numberwithin{equation}{section}
\newtheorem{Theorem}{Theorem}[section]
\newtheorem{Proposition}[Theorem]{Proposition}
\newtheorem{Conjecture}[Theorem]{Conjecture}
 { \theoremstyle{definition}

\newtheorem{Remark}[Theorem]{Remark} }
\begin{document}
\allowdisplaybreaks

\newcommand{\arXivNumber}{1811.03285}

\renewcommand{\PaperNumber}{074}

\FirstPageHeading

\ShortArticleName{Combinatorial Expressions for the Tau Functions of $q$-Painlev\'e V and III Equations}

\ArticleName{Combinatorial Expressions for the Tau Functions\\
of $\boldsymbol{q}$-Painlev\'e V and III Equations}

\Author{Yuya MATSUHIRA and Hajime NAGOYA}

\AuthorNameForHeading{Y.~Matsuhira and H.~Nagoya}

\Address{School of Mathematics and Physics, Kanazawa University, Kanazawa, Ishikawa 920-1192, Japan}
\Email{\href{mailto:y.matsu0727@gmail.com}{y.matsu0727@gmail.com}, \href{mailto:nagoya@se.kanazawa-u.ac.jp}{nagoya@se.kanazawa-u.ac.jp}}

\ArticleDates{Received November 24, 2018, in final form September 13, 2019; Published online September 23, 2019}

\Abstract{We derive series representations for the tau functions of the $q$-Painlev\'e~V,~$\mathrm{III_1}$, $\mathrm{III_2}$, and $\mathrm{III_3}$ equations, as degenerations of the tau functions of the $q$-Painlev\'e VI equation in~[Jimbo M., Nagoya H., Sakai H., \textit{J.~Integrable Syst.} \textbf{2} (2017), xyx009, 27~pages]. Our tau functions are expressed in terms of $q$-Nekrasov functions. Thus, our series representations for the tau functions have explicit combinatorial structures. We show that general solutions to the $q$-Painlev\'e V, $\mathrm{III_1}$, $\mathrm{III_2}$, and $\mathrm{III_3}$ equations are written by our tau functions. We also prove that our tau functions for the $q$-Painlev\'e $\mathrm{III_1}$, $\mathrm{III_2}$, and~$\mathrm{III_3}$ equations satisfy the three-term bilinear equations for them.}

\Keywords{$q$-Painlev\'e equations; tau functions; $q$-Nekrasov functions; bilinear equations}

\Classification{39A13; 33E17; 05A30}

\section{Introduction}\label{section1}
The $q$-Painlev\'e equations \cite{KNY, RGH} are $q$-difference analogs of the Painlev\'e equations, which were introduced as new special functions beyond elliptic functions and the hypergeometric functions more than one hundred years ago \cite{Gambier, P,P+}, and are now considered as important special functions with many applications both in mathematics and physics.

Similarly, as for other integrable systems, tau functions play a crucial role in the studies of the Painelv\'e equations. The recent discovery by \cite{GIL} states that the tau function of the sixth Painlev\'e equation is a Fourier transform of Virasoro conformal blocks with $c=1$, which admit explicit combinatorial formulas by AGT correspondence~\cite{AGT}. Series representations of the tau functions of other types are also studied in \cite{BLMST,GIL1,N,N1} for differential cases, \cite{BS,BS1,JNS} for $q$-difference cases.

In \cite{JNS}, a general solution ($y,z$) to the $q$-Painlev\'e VI equation \cite{JS} was expressed by the tau functions having $q$-Nekrasov type expressions,
and it was conjectured that the tau functions satisfy the bilinear equations
for the $q$-Painlev\'e VI equation.
In this paper, we give explicit expressions for general solutions
to the $q$-Painlev\'e V, $\mathrm{III_1}$, $\mathrm{III_2}$, and $\mathrm{III_3}$
 equations using degenerations of the tau functions
of the $q$-Painlev\'e VI equation.
We also give conjectures on the bilinear equations satisfied by the tau functions
of the $q$-Painlev\'e V equation and prove that the tau functions of the $q$-Painlev\'e
$\mathrm{III_1}$, $\mathrm{III_2}$, and $\mathrm{III_3}$ equations
satisfy the bilinear equations.

Our $q$-difference equations are as follows.

\noindent
(i) the $q$-Painlev\'e VI equation:
\begin{gather*}
 \frac{y\overline{y}}{a_3a_4}=
\frac{(\overline{z}-b_1t)(\overline{z}-b_2t)}{(\overline{z}-b_3)(\overline{z}-b_4)} ,
\qquad
\frac{z\overline{z}}{b_3b_4}=
\frac{(y-a_1t)(y-a_2t)}{(y-a_3)(y-a_4)} .
\end{gather*}
(ii) the $q$-Painlev\'e V equation:
\begin{gather*}
\frac{y \overline y}{a_3 a_4}=-\frac{(\overline z-b_1 t)(\overline z -b_2 t)}{\overline z-b_3},\qquad
\frac{z \overline z}{b_3}=-\frac{(y-a_1t)(y-a_2 t)}{a_4(y-a_3)}.
\end{gather*}
(iii) the $q$-Painlev\'e $\mathrm{III}_1$ equation:
\begin{gather*}
\frac{y \overline y}{a_3 a_4}=-\frac{\overline z(\overline z -b_2 t)}{\overline z-b_3},\qquad
\frac{z \overline z}{b_3}=-\frac{y(y-a_1 t)}{a_4(y-a_3)}.
\end{gather*}
(iv) the $q$-Painlev\'e $\mathrm{III}_2$ equation:
\begin{gather*}
\frac{y \overline y}{a_3 a_4}=-\frac{\overline z^2}{\overline z-b_3} ,\qquad
\frac{z \overline z}{b_3}=-\frac{y(y-a_2 t)}{a_4(y-a_3)}.
\end{gather*}
(v-1) the $q$-Painlev\'e $\mathrm{III}_3$ equation of surface type $A_7^{(1)\prime}$:
\begin{gather*}
\frac{y \overline y}{a_3 }=\overline z^2 ,\qquad
z \overline z=-\frac{y(y-a_2 t)}{y-a_3}.
\end{gather*}
(v-2) the $q$-Painlev\'e $\mathrm{III}_3$ equation of surface type $A_7^{(1)}$:
\begin{gather*}
\frac{y \overline y}{a_3}=-\frac{\overline z^2}{\overline z-b_3} ,\qquad
z \overline z=\frac{y(y-a_2 t)}{a_2}.
\end{gather*}
Here, $y$, $z$ are functions of $t$, $\overline y=y(qt)$, $\overline z=z(qt)$, and $a_i$, $b_i$ ($i=1,2,3,4$) are parameters.

From the point of view of Sakai's classification for the discrete Painlev\'e equations~\cite{S}, the $q$-Painlev\'e VI, V, $\mathrm{III}_1$, $\mathrm{III}_2$ and $\mathrm{III}_3$ equations are derived from the symmetries/surfaces of type $D_5^{(1)}/A_3^{(1)}$, $A_4^{(1)}/A_4^{(1)}$, $E_2^{(1)}/A_5^{(1)}$, $E_2^{(1)}/A_6^{(1)}$ and $A_1^{(1)}/A_7^{ (1)}$, respectively.

The degeneration scheme of Painlev\'e equations is as follows
\begin{gather*}
\begin{diagram}
\node{\mathrm{P_{VI}}}\arrow{e}
\node{\mathrm{P_{V}}}\arrow{e}\arrow{se}
\node{\mathrm{P_{III_1}}}\arrow{e}\arrow{se}
\node{\mathrm{P_{III_2}}}\arrow{e}\arrow{se}
\node{\mathrm{P_{III_3}}}
\\
\node[3]{\mathrm{P_{IV}}}\arrow{e}
\node{\mathrm{P_{II}}}\arrow{e}
\node{\mathrm{P_I}.}
\end{diagram}
\end{gather*}
The degeneration pattern of the $q$-Painlev\'e equations we use is similar to the one in~\cite{Mu} but not exactly the same. Rather, our limiting procedure is a $q$-version for the one used in~\cite{GIL1} in order to derive combinatorial expressions of tau functions of~$\mathrm{P_V}$, $\mathrm{P_{III_1}}$, $\mathrm{P_{III_2}}$, and $\mathrm{P_{III_3}}$
from the Nekrasov type expression of the tau function of~$\mathrm{P_{VI}}$~\cite{GIL}.

For the case of the $q$-Painlev\'e $\mathrm{III}_3$ equation of surface type $A_7^{(1)\prime}$, a series representation for the tau function was proposed in~\cite{BS}, which are
expressed by $q$-Virasoro Whittaker conformal blocks which equal Nekrasov partition functions for pure ${\rm SU}(2)$ 5d theory \cite{AY,Y}. A~Fredholm determinant representation of the tau function by topological strings/spectral theory duality is proposed in~\cite{BGT}. For the $q$-Painlev\'e $\mathrm{III}_3$ equation of surface type~$A_7^{(1)}$, a series representation for the tau function was proposed in~\cite{BGM}. Our tau functions for the $q$-Painlev\'e $\mathrm{III}_3$ equations obtained by the degeneration are equivalent to them.

Our plan is as follows. In Section~\ref{section2}, we recall the result on $q$-Painlev\'e VI equation in \cite{JNS}. In Sections~\ref{section3}--\ref{section6}, we compute limits of tau functions and derive combinatorial expressions of general solutions and bilinear equations for $q$-Painlev\'e~V, $\mathrm{III}_1$, $\mathrm{III}_2$ and $\mathrm{III}_3$ equations.

{\it Notations.} Throughout the paper we fix $q\in\mathbb{C}^\times$ such that $|q|<1$. We set
\begin{gather*}
 [u]=\big(1-q^u\big)/(1-q),\qquad (a;q)_N=\prod_{j=0}^{N-1}\big(1-aq^j\big),\\
(a_1,\dots,a_k;q)_{\infty}=\prod_{j=1}^k(a_j;q)_\infty,\qquad (a;q,q)_\infty=\prod_{j,k=0}^\infty\big(1-aq^{j+k}\big).
\end{gather*}
We use the $q$-Gamma function and $q$-Barnes function defined by
\begin{gather*}
 \Gamma_q(u)=\frac{(q;q)_\infty}{(q^u;q)_\infty}(1-q)^{1-u}, \qquad
G_q(u)=\frac{(q^u;q,q)_\infty}{(q;q,q)_\infty}(q;q)_\infty^{u-1}(1-q)^{-(u-1)(u-2)/2},
\end{gather*}
which satisfy $\Gamma_q(1)=G_q(1)=1$ and
\begin{gather}
 \Gamma_q(u+1)=[u]\Gamma_q(u), \qquad G_q(u+1)=\Gamma_q(u)G_q(u). \label{eq_Gamma_Barnes}
\end{gather}
A partition is a finite sequence of positive integers $\lambda=(\lambda_1,\dots,\lambda_l)$ such that $\lambda_1\ge\dots\ge\lambda_{l}>0$. Denote the length of the partition by $\ell(\lambda)=l$. The conjugate partition $\lambda'=(\lambda'_1,\dots,\lambda'_{l'})$ is defined by $\lambda'_j=\sharp\{i\,|\, \lambda_i\ge j\}$, $l'=\lambda_1$. We regard a partition as a Young diagram. Namely, we regard a partition $\lambda$ also as the subset $\{(i,j)\in\mathbb{Z}^2\,|\, 1\le j\le \lambda_i,\, i\ge 1\}$ of $\mathbb{Z}^2$, and denote its cardinality by $|\lambda|$. We denote the set of all partitions by~$\mathbb{Y}$. For $\square=(i,j)\in\mathbb{Z}_{>0}^2$ we set $a_\lambda(\square)=\lambda_i-j$
(the arm length of $\square$) and $\ell_\lambda(\square)=\lambda'_j-i$ (the leg length of $\square$). In the last formulas we set $\lambda_i=0$ if $i>\ell(\lambda)$
(resp.~$\lambda'_j=0$ if $j>\ell(\lambda')$). For a pair of partitions ($\lambda, \mu$) and $u\in\mathbb{C}$ we set
\begin{gather*}
N_{\lambda,\mu}(u)=\prod_{\square\in \lambda}\big( 1-q^{-\ell_{\lambda}(\square)-a_\mu(\square)-1}u\big)
\prod_{\square\in \mu}\big(1-q^{\ell_{\mu}(\square)+a_\lambda(\square)+1}u\big),
\end{gather*}
which we call a Nekrasov factor.

\section[Results on $q$-$\mathrm{P_{VI}}$ from \cite{JNS}]{Results on $\boldsymbol{q}$-$\boldsymbol{\mathrm{P_{VI}}}$ from \cite{JNS}}\label{section2}

In this section, we recall the results of \cite{JNS} on the $q$-Painlev\'e VI equation. Define the tau function by
\begin{gather*}
 \tau^{\mathrm{VI}}\left[
\begin{matrix}
 \theta_1 & \theta_t \\
 \theta_\infty & \theta_0
\end{matrix}\Bigl|s,\sigma,t\right]
=\sum_{n\in\mathbb{Z}}
s^n t^{(\sigma+n)^2-\theta_t^2-\theta_0^2}
C\left[\begin{matrix}
\theta_1 & \theta_t \\
 \theta_\infty & \theta_0\\
\end{matrix}\Bigl|\sigma+n\right]
Z\left[\begin{matrix}
\theta_1 & \theta_t\\
\theta_\infty &\theta_0 \\
\end{matrix}\Bigl|\sigma+n,t\right] ,
\end{gather*}
with the definition
\begin{gather*}
 C\left[\begin{matrix}
\theta_1 & \theta_t \\
\theta_\infty & \theta_0
\end{matrix}\Bigl|
\sigma\right]
=\frac{\prod\limits_{\varepsilon,\varepsilon'=\pm}G_q(1+\varepsilon\theta_\infty-\theta_1+\varepsilon'\sigma)
G_q(1+\varepsilon\sigma-\theta_t+\varepsilon'\theta_0)}{G_q(1+2\sigma)G_q(1-2\sigma)}
 ,
\\
 Z\left[\begin{matrix}
 \theta_1 & \theta_t\\
 \theta_\infty & \theta_0
\end{matrix}\Bigl|
\sigma,t\right]
=\sum_{\boldsymbol{\lambda}=(\lambda_+,\lambda_-)\in\mathbb{Y}^2}
t^{|\boldsymbol{\lambda}|}\cdot
\frac{\prod\limits_{\varepsilon,\varepsilon'=\pm}
N_{\varnothing,\lambda_{\varepsilon'}}\big(q^{\varepsilon\theta_\infty-\theta_1-\varepsilon'\sigma}\big)
N_{\lambda_{\varepsilon},\varnothing}\big(q^{\varepsilon\sigma-\theta_t-\varepsilon'\theta_0}\big)}
{\prod\limits_{\varepsilon,\varepsilon'=\pm}N_{\lambda_\varepsilon,\lambda_{\varepsilon'}}\big(q^{(\varepsilon-\varepsilon')\sigma}\big)} .
\end{gather*}

Put
\begin{alignat*}{3}
& \tau_1^{\mathrm{VI}} =\tau^{\mathrm{VI}}\left[
\begin{matrix}
 \theta_1 & \theta_t \\
 \theta_\infty+\tfrac{1}{2} & \theta_0\\
\end{matrix}\Bigl|s,\sigma,t\right] ,
\qquad &&
\tau_2^{\mathrm{VI}} =\tau^{\mathrm{VI}}\left[
\begin{matrix}
 \theta_1 & \theta_t \\
 \theta_\infty-\tfrac{1}{2} & \theta_0\\
\end{matrix}\Bigl|s,\sigma,t\right] ,&\\
& \tau_3^{\mathrm{VI}} =\tau^{\mathrm{VI}}\left[
\begin{matrix}
 \theta_1 & \theta_t \\
 \theta_\infty & \theta_0+\tfrac{1}{2}\\
\end{matrix}\Bigl|s,\sigma+\tfrac{1}{2} ,t\right] ,
\qquad &&
\tau_4^{\mathrm{VI}} =\tau^{\mathrm{VI}}\left[
\begin{matrix}
 \theta_1 & \theta_t \\
 \theta_\infty & \theta_0-\tfrac{1}{2}\\
\end{matrix}\Bigl|s,\sigma-\tfrac{1}{2} ,t\right] ,&
\\
& \tau_5^{\mathrm{VI}}=\tau^{\mathrm{VI}}\left[
\begin{matrix}
 \theta_1-\tfrac{1}{2} & \theta_t \\
 \theta_\infty & \theta_0\\
\end{matrix}\Bigl|s,\sigma,t\right] ,
\qquad & &
\tau_6^{\mathrm{VI}} =\tau^{\mathrm{VI}}\left[
\begin{matrix}
 \theta_1+\tfrac{1}{2} & \theta_t \\
 \theta_\infty & \theta_0\\
\end{matrix}\Bigl|s,\sigma,t\right] ,&
\\
& \tau_7^{\mathrm{VI}} =\tau^{\mathrm{VI}}\left[
\begin{matrix}
 \theta_1 & \theta_t-\tfrac{1}{2} \\
 \theta_\infty & \theta_0\\
\end{matrix}\Bigl|s,\sigma+\tfrac{1}{2} ,t\right] ,
\qquad
&&
\tau_8^{\mathrm{VI}} =\tau^{\mathrm{VI}}\left[
\begin{matrix}
 \theta_1 & \theta_t+\tfrac{1}{2} \\
 \theta_\infty & \theta_0\\
\end{matrix}\Bigl|s,\sigma-\tfrac{1}{2} ,t\right] .&
\end{alignat*}

Here and after we write $\overline{f}(t)=f(qt)$, $\underline{f}(t)=f(t/q)$.
\begin{Theorem}[\cite{JNS}]\label{thm_qPVI}
The functions $y$ and $z$ defined by
\begin{gather}\label{eq_yz_thm}
y= q^{-2\theta_1-1}t\cdot \frac{\tau_3^{\mathrm{VI}}\tau_4^{\mathrm{VI}}}{\tau_1^{\mathrm{VI}}\tau_2^{\mathrm{VI}}} ,
\qquad
z=\frac{\underline{\tau_1^{\mathrm{VI}}}\tau_2^{\mathrm{VI}}-\tau_1^{\mathrm{VI}}\underline{\tau_2^{\mathrm{VI}}}}
{q^{1/2+\theta_\infty} \underline{\tau_1^{\mathrm{VI}}}\tau_2^{\mathrm{VI}}-
q^{1/2-\theta_\infty }\tau_1^{\mathrm{VI}}\underline{\tau_2^{\mathrm{VI}}}}
\end{gather}
are solutions to the $q$-Painlev\'e VI equation
\begin{gather}
 \frac{y\overline{y}}{a_3a_4}=
\frac{(\overline{z}-b_1t)(\overline{z}-b_2t)}{(\overline{z}-b_3)(\overline{z}-b_4)} ,
\qquad
\frac{z\overline{z}}{b_3b_4}=
\frac{(y-a_1t)(y-a_2t)}{(y-a_3)(y-a_4)} ,\label{eq_qPVI}
\end{gather}
with the parameters
\begin{gather*}
 a_1=q^{-2\theta_1-1},\qquad a_2=q^{-2\theta_t-2\theta_1-1} ,
\qquad a_3=q^{-1},\qquad a_4=q^{-2\theta_1-1},\\
 b_1=q^{-\theta_0-\theta_t-\theta_1},\qquad
b_2=q^{\theta_0-\theta_t-\theta_1},\qquad
\qquad b_3=q^{\theta_\infty-1/2},\qquad
b_4= q^{-\theta_\infty-1/2} .
\end{gather*}
\end{Theorem}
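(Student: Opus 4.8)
I would base the proof on the bilinear ($\tau$-function) form of $q$-$\mathrm{P_{VI}}$. The first step is to substitute \eqref{eq_yz_thm} — together with the expressions for $\overline{y}$ and $\overline{z}$ obtained from it by the shift $t\mapsto qt$ — into the two equations \eqref{eq_qPVI} and clear denominators. Each equation then becomes a polynomial identity whose monomials are products of two tau functions with the parameters $(\theta_1,\theta_t,\theta_\infty,\theta_0,\sigma)$ shifted by half-integers and with $t$ shifted by at most one step of $q$; collecting these identities gives the $\tau$-form of $q$-$\mathrm{P_{VI}}$. Since the lattice on which the $\tau_i^{\mathrm{VI}}$ live is generated by a few elementary translations together with the B\"acklund symmetry $W\!\big(D_5^{(1)}\big)$, it suffices to prove a minimal set of elementary bilinear relations and propagate them by that symmetry.

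For a fixed elementary relation I would expand each tau function as $\tau^{\mathrm{VI}}[\cdots|s,\sigma,t]=\sum_{n\in\mathbb Z}s^n t^{(\sigma+n)^2-\theta_t^2-\theta_0^2}\,C[\cdots|\sigma+n]\,Z[\cdots|\sigma+n,t]$ and collect powers of $s$ and $t$. For a fixed power of $s$, the resulting identity is a convergent sum of products of summands $C\,Z$ differing by an auxiliary integer shift of $n$ — i.e.\ a ($q$-deformed) blow-up/Nakajima--Yoshioka relation for the conformal blocks. The factor $C[\cdots]$, being a ratio of $q$-Barnes functions, is controlled by the functional equations \eqref{eq_Gamma_Barnes}: the quotient of $C$'s occurring in any elementary relation collapses to a finite product of $q$-Gamma factors. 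What remains is a bilinear identity for the instanton parts $Z[\cdots|\sigma,t]$, i.e.\ for sums over pairs of Young diagrams of products of Nekrasov factors $N_{\lambda,\mu}(u)$.

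Proving this last identity is where I expect the main difficulty to lie. I would deduce it from the $q$-analogue of the blow-up equations for the 5d ${\rm SU}(2)$ Nekrasov partition function with four fundamental flavours — equivalently, from bilinear relations for the corresponding $q$-deformed Virasoro conformal blocks entering the AGT correspondence \cite{AGT} — which assert precisely the vanishing of the combination of shifted $Z$'s that appears here. Making such relations self-contained would in turn require either a free-field derivation (realising $Z$ as a matrix element of Ding--Iohara--Miki / Awata--Feigin--Shiraishi intertwiners, with the half-integer shifts of $\theta_\bullet,\sigma$ corresponding to insertions of elementary screening and shift operators), or a direct but laborious reorganisation of the double sum over $\mathbb{Y}^2$ via the arm--leg statistics of $N_{\lambda,\mu}$. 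As an alternative to the whole scheme, one may attach to $\tau^{\mathrm{VI}}$ a $2\times2$ linear $q$-difference system and show, by a deformation argument preserving its connection matrix, that $(y,z)$ of \eqref{eq_yz_thm} governs the resulting isomonodromy; the obstacle then moves to the Riemann--Hilbert correspondence for $q$-difference equations and to the convergence of the series.

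Finally, the identification of $a_i,b_i$ with powers $q^{\pm\theta_\bullet}$ is bookkeeping: once the $\tau$-form is available one matches the exponents that enter each elementary relation against the coefficients $a_1t,a_2t,a_3,a_4$ and $b_1t,b_2t,b_3,b_4$ in \eqref{eq_qPVI}, while the prefactors $q^{-2\theta_1-1}t$ in $y$ and $q^{1/2\pm\theta_\infty}$ in $z$ are pinned down by requiring the two sides of \eqref{eq_qPVI} to agree as rational functions of $\overline{z}$ and of $y$ respectively.
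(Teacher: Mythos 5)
This theorem is not proved in the present paper: it is imported verbatim from [JNS], and the authors state explicitly how it was obtained there --- by constructing the fundamental solution of the Lax pair \eqref{eq_qPVI_Lax} for $q$-$\mathrm{P_{VI}}$ of Jimbo--Sakai in terms of $q$-conformal blocks in the sense of Awata--Feigin--Shiraishi, i.e.\ a $q$-analogue of the CFT/isomonodromy approach of Iorgov--Lisovyy--Teschner, with convergence of the fundamental solution \emph{assumed}. That is precisely the route you relegate to a single sentence at the end of your third paragraph (``as an alternative to the whole scheme\ldots''), and you correctly identify its weak point (convergence / the $q$-Riemann--Hilbert correspondence). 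So your alternative is the paper's actual method; your primary method is not.

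The primary route you propose has a genuine gap: reducing the theorem to the $\tau$-form of $q$-$\mathrm{P_{VI}}$ and then to $q$-deformed blow-up relations for the Nekrasov sums would reduce a theorem to a conjecture. The bilinear equations \eqref{bilin-1}--\eqref{bilin-8} for exactly these tau functions are stated in this paper (and in [JNS]) as Conjecture~\ref{conj_qPVI}, not as a theorem; the corresponding blow-up-type identities for the 5d ${\rm SU}(2)$ partition function with four fundamentals were not available at the time (the proved cases in [BS1] concern the pure/$A_7$ situations). There is also a structural mismatch you gloss over: the $z$ of \eqref{eq_yz_thm} is a ratio of \emph{differences} of products of tau functions, not a single ratio of products, so substituting it into \eqref{eq_qPVI} does not directly produce three-term bilinear monomials; converting it to the bilinear form $z=-q^{\theta_t-\theta_1-1}t\,\underline{\tau_7^{\mathrm{VI}}}\tau_8^{\mathrm{VI}}/\underline{\tau_5^{\mathrm{VI}}}\tau_6^{\mathrm{VI}}$ already requires the conjectural relations \eqref{bilin-5}--\eqref{bilin-6}. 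The only bilinear identity that does follow from the Lax pair is the four-term relation \eqref{eq_qPVI_4termbilinear}, which is far from enough to close your argument. Your final paragraph on fixing the parameters $a_i$, $b_i$ is fine as bookkeeping, but it presupposes a $\tau$-form you have not established.
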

The formula for $y$ above can be regarded as an extension of Mano's asymptotic expansion to all orders for the solution of $q$-$\mathrm{P_{VI}}$~\cite{Ma}. Theorem \ref{thm_qPVI} was obtained by constructing the fundamental solution of the Lax-pair for $q$-$\mathrm{P_{VI}}$ in~\cite{JS}, in terms of $q$-conformal blocks in~\cite{AFS}. The method of construction of the fundamental solution is a $q$-analogue of the CFT approach used in~\cite{ILT}. In the derivation of Theorem \ref{thm_qPVI} convergence of the fundamental solution was assumed and it has not been proved. Recently, analyticity of K-theoretic Nekrasov functions in cases different from our case was discussed in~\cite{FM}. We remark that the convergence of the pure $q$-Nekrasov partition function with $q_1q_2=1$ on $\mathbb{C}$ is proved in~\cite{BS}.

\begin{Conjecture}[\cite{JNS}]\label{conj_qPVI} The tau functions $\tau_i^{\mathrm{VI}}$ $(i=1,\dots, 8)$ satisfy the following bilinear equations
\begin{gather}
 \tau_1^{\mathrm{VI}}\tau_2^{\mathrm{VI}}-q^{-2\theta_1}t \tau_3^{\mathrm{VI}}\tau_4^{\mathrm{VI}}
-\big(1-q^{-2\theta_1}t\big)\tau_5^{\mathrm{VI}} \tau_6^{\mathrm{VI}}=0,
\label{bilin-1}\\
 \tau_1^{\mathrm{VI}}\tau_2^{\mathrm{VI}}-t \tau_3^{\mathrm{VI}}\tau_4^{\mathrm{VI}}
-\big(1-q^{-2\theta_t}t\big) \underline{\tau_5^{\mathrm{VI}}}\overline{\tau_6^{\mathrm{VI}}}=0,
\label{bilin-2}\\
 \tau_1^{\mathrm{VI}}\tau_2^{\mathrm{VI}}-\tau_3^{\mathrm{VI}}\tau_4^{\mathrm{VI}}+\big(1-q^{-2\theta_1}t\big)q^{2\theta_t}\underline{\tau_7^{\mathrm{VI}}}\overline{\tau_8^{\mathrm{VI}}}=0,
\label{bilin-3}\\
 \tau_1^{\mathrm{VI}}\tau_2^{\mathrm{VI}}-q^{2\theta_t}\tau_3^{\mathrm{VI}}\tau_4^{\mathrm{VI}}+
\big(1-q^{-2\theta_t}t\big)q^{2\theta_t}\tau_7^{\mathrm{VI}} \tau_8^{\mathrm{VI}}=0,
\label{bilin-4}\\
 \underline{\tau_5^{\mathrm{VI}}}\tau_6^{\mathrm{VI}}+q^{-\theta_1-\theta_\infty+\theta_t-1/2}t \underline{\tau_7^{\mathrm{VI}}}\tau_8^{\mathrm{VI}}
-\underline{\tau_1^{\mathrm{VI}}}\tau_2^{\mathrm{VI}}=0,
\label{bilin-5}\\
 \underline{\tau_5^{\mathrm{VI}}}\tau_6^{\mathrm{VI}}+q^{-\theta_1+\theta_\infty+\theta_t-1/2}t \underline{\tau_7^{\mathrm{VI}}}\tau_8^{\mathrm{VI}}
-\tau_1^{\mathrm{VI}}\underline{\tau_2^{\mathrm{VI}}}=0,
\label{bilin-6}\\
 \underline{\tau_5^{\mathrm{VI}}}\tau_6^{\mathrm{VI}}+q^{\theta_0+2\theta_t}\underline{\tau_7^{\mathrm{VI}}}\tau_8^{\mathrm{VI}}
-q^{\theta_t} \underline{\tau_3^{\mathrm{VI}}}\tau_4^{\mathrm{VI}}=0,
\label{bilin-7}\\
 \underline{\tau_5^{\mathrm{VI}}}\tau_6^{\mathrm{VI}}+q^{-\theta_0+2\theta_t}\underline{\tau_7^{\mathrm{VI}}}\tau_8^{\mathrm{VI}}
-q^{\theta_t} \tau_3^{\mathrm{VI}}\underline{\tau_4^{\mathrm{VI}}}=0 .
\label{bilin-8}
\end{gather}
Then, the function $y,z$
\begin{gather}\label{yz-final}
y=q^{-2\theta_1-1}t \frac{\tau_3^{\mathrm{VI}}\tau_4^{\mathrm{VI}}}{\tau_1^{\mathrm{VI}}\tau_2^{\mathrm{VI}}},
\qquad
z=-q^{\theta_t-\theta_1-1} t
\frac{\underline{\tau_7^{\mathrm{VI}}}\tau_8^{\mathrm{VI}}}{\underline{\tau_5^{\mathrm{VI}}}\tau_6^{\mathrm{VI}}}
\end{gather}
solves $q$-$\mathrm{P_{VI}}$ \eqref{eq_qPVI}.
\end{Conjecture}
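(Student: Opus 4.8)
My plan is to split the statement into its two logical pieces — the eight bilinear identities \eqref{bilin-1}--\eqref{bilin-8} themselves, and the assertion that the pair $(y,z)$ of \eqref{yz-final} then solves \eqref{eq_qPVI} — and to observe that the second piece is an immediate corollary of the first together with the already-established Theorem~\ref{thm_qPVI}. Indeed, the $y$ in \eqref{yz-final} is \emph{identical} to the $y$ in \eqref{eq_yz_thm}, so it suffices to show the two expressions for $z$ agree once the bilinear equations hold. Using \eqref{bilin-5} and \eqref{bilin-6} to substitute $\underline{\tau_1^{\mathrm{VI}}}\tau_2^{\mathrm{VI}}=\underline{\tau_5^{\mathrm{VI}}}\tau_6^{\mathrm{VI}}+q^{-\theta_1-\theta_\infty+\theta_t-1/2}t\,\underline{\tau_7^{\mathrm{VI}}}\tau_8^{\mathrm{VI}}$ and the analogous expression for $\tau_1^{\mathrm{VI}}\underline{\tau_2^{\mathrm{VI}}}$ into the numerator and denominator of the $z$ of \eqref{eq_yz_thm}, the $\underline{\tau_5^{\mathrm{VI}}}\tau_6^{\mathrm{VI}}$ contributions cancel in the numerator while the $\underline{\tau_7^{\mathrm{VI}}}\tau_8^{\mathrm{VI}}$ contributions cancel in the denominator, leaving exactly $z=-q^{\theta_t-\theta_1-1}t\,\underline{\tau_7^{\mathrm{VI}}}\tau_8^{\mathrm{VI}}/(\underline{\tau_5^{\mathrm{VI}}}\tau_6^{\mathrm{VI}})$. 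Thus the whole difficulty is concentrated in the bilinear equations, which are the genuinely conjectural content.

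To attack \eqref{bilin-1}--\eqref{bilin-8} I would substitute the defining Fourier--Nekrasov series into each identity. A product such as $\tau_a^{\mathrm{VI}}\tau_b^{\mathrm{VI}}$ becomes a double series $\sum_{n,m}s^{n+m}(\cdots)$, and since the $\tau_i^{\mathrm{VI}}$ differ from one another only by half-integer shifts of $\theta_1,\theta_t,\theta_\infty,\theta_0$ and $\sigma$, the several products occurring in any one equation carry the same overall power of $s$. As $s$ is a free parameter, the identity holds iff it holds coefficientwise in $s$; collecting the coefficient of each $s^N$ reduces each of \eqref{bilin-1}--\eqref{bilin-8} to a family (indexed by the charge $N$) of bilinear relations among the kernels $C[\cdots]\,Z[\cdots]$ at shifted arguments. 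The $t$-exponents $(\sigma+n)^2-\theta_t^2-\theta_0^2$ are arranged by the $\sigma\to\sigma\pm\tfrac12$ shifts precisely so that, after factoring out a common monomial in $t$, each such relation is graded by the $Z$-expansion variable.

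The next step is to dispose of the prefactors $C[\cdots]$. In any ratio of $C$-products the half-integer shifts of $\theta_1,\theta_t,\theta_\infty,\theta_0,\sigma$ enter in canceling pairs, so that the arguments of the $q$-Barnes functions $G_q$ appearing in that ratio differ only by integers; the functional equations \eqref{eq_Gamma_Barnes}, namely $G_q(u+1)=\Gamma_q(u)G_q(u)$ and $\Gamma_q(u+1)=[u]\Gamma_q(u)$, then telescope every $G_q$-ratio into a finite product of $q$-Gamma functions times explicit powers of $q$. After this purely bookkeeping reduction the only content remaining in each relation is a bilinear identity among the Nekrasov series $Z$ with parameters shifted by $\pm\tfrac12$.

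Proving these residual $Z$-identities is the main obstacle, and is exactly where the statement is still open. I would try to establish them by extracting the coefficient of $t^N$ — equivalently, restricting the sums over $\mathbb{Y}^2$ to pairs with $|\boldsymbol{\lambda}|=N$ — and proving the resulting finite identity among products of Nekrasov factors $N_{\lambda,\mu}(u)$ for every $N$. A brute-force box count seems unlikely to close uniformly in $N$; the realistic route is to use the realization of $Z$ as a matrix element of the Awata--Feigin--Shiraishi intertwiners for the quantum toroidal $\mathfrak{gl}_1$ algebra, so that each bilinear equation becomes the image, under the tau-function pairing, of a coproduct/blowup operator identity. This is the $q$-analogue of the Nakajima--Yoshioka blowup equations that underlie the differential Painlev\'e~VI bilinear relations, and I expect the crux to be proving the relevant $q$-deformed blowup identity uniformly in the charge $N$, rather than the reindexing and $C$-factor bookkeeping that surround it.
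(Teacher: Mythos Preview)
Your reduction is exactly what the paper does: the statement is presented as a \emph{conjecture} (from \cite{JNS}) and is not proved there; the only argument the paper supplies is the sentence immediately following it, namely that the $y$ in \eqref{yz-final} coincides with that of Theorem~\ref{thm_qPVI} and that substituting \eqref{bilin-5}--\eqref{bilin-6} into the $z$ of \eqref{eq_yz_thm} produces the $z$ of \eqref{yz-final}. You reproduce this correctly.

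Your further plan for the bilinear identities themselves --- expanding in $s$, cancelling the $G_q$ prefactors via \eqref{eq_Gamma_Barnes}, and reducing to $q$-blowup/AFS-intertwiner identities --- is a reasonable strategy (and is in the spirit of what \cite{BS1} carries out for the $\mathrm{III}_3$ cases), but the paper neither attempts nor claims any such proof; the eight equations \eqref{bilin-1}--\eqref{bilin-8} remain open here, so there is nothing in the paper to compare your outline against on that point.
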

The function $y$ in Conjecture \ref{conj_qPVI} is expressed in the same form in Theorem~\ref{thm_qPVI}, while the function $z$ in Conjecture~\ref{conj_qPVI} is not. By the bilinear equations~\eqref{bilin-5} and~\eqref{bilin-6}, we obtain the expression of $z$ in~\eqref{yz-final} from the expression of $z$ in~\eqref{eq_yz_thm}.

We note that in \cite{JNS} we have a Lax pair with respect to the shift $t\to qt$, namely, a fundamental solution of the linear $q$-difference equations
 \begin{gather}\label{eq_qPVI_Lax}
 Y(qx,t)=A(x,t)Y(x,t),\qquad Y(x,qt)=B(x,t)Y(x,t)
 \end{gather}
for certain 2 by 2 matrices $A(x,t)$ and $B(x,t)$ was constructed in terms of $q$-Nekrasov functions. From \eqref{eq_qPVI_Lax} we obtain the four-term bilinear equation in \cite[Remark~3.5]{JNS}:
\begin{gather}\label{eq_qPVI_4termbilinear}
\underline{\tau_1^{\mathrm{VI}}}\tau_2^{\mathrm{VI}}-\tau_1^{\mathrm{VI}}\underline{\tau_2^{\mathrm{VI}}}
 =\frac{q^{1/2+\theta_\infty}-q^{1/2-\theta_\infty}}
 {q^{-\theta_0}-q^{\theta_0}}q^{-\theta_1-1}t
 \big(\underline{\tau_3^{\mathrm{VI}}}\tau_4^{\mathrm{VI}}-\tau_3^{\mathrm{VI}}\underline{\tau_4^{\mathrm{VI}}}\big) .
\end{gather}

\section[From $q$-$\mathrm{P_{VI}}$ to $q$-$\mathrm{P_V}$]{From $\boldsymbol{q}$-$\boldsymbol{\mathrm{P_{VI}}}$ to $\boldsymbol{q}$-$\boldsymbol{\mathrm{P_V}}$}\label{section3}

In this section, we take a limit of the tau functions of $q$-$\mathrm{P_{VI}}$ to $q$-$\mathrm{P_V}$. Define the tau function by
\begin{gather*}
\tau^{\mathrm{V}}(\theta_*,\theta_t,\theta_0\,|\, s,\sigma,t)=\sum_{n\in\mathbb{Z}}s^n t^{(\sigma+n)^2-\theta_t^2-\theta_0^2}
C_{\mathrm{V}} [\theta_*,\theta_t,\theta_0\,|\,\sigma+n ]
 Z_{\mathrm{V}} [\theta_*,\theta_t,\theta_0\,|\,\sigma+n,t ] ,
\end{gather*}
with
\begin{gather*}
C_{\mathrm{V}}\left[\theta_*,\theta_t,\theta_0\,|\,\sigma\right]
= (q-1)^{-\sigma^2}
\prod_{\varepsilon=\pm} \frac{G_q(1-\theta_*+\varepsilon\sigma)}{G_q(1+2\varepsilon\sigma)}
\prod_{\varepsilon,\varepsilon'=\pm}G_q(1+\varepsilon\sigma-\theta_t+\varepsilon'\theta_0),
\\
Z_{\mathrm{V}}\left[\theta_*,\theta_t,\theta_0\,|\,\sigma,t\right]=
\!\sum_{(\lambda_+,\lambda_-)\in\mathbb{Y}^2}\!\!
t^{|\lambda_+|+|\lambda_-|}
\frac{\prod\limits_{\varepsilon=\pm}\! N_{\varnothing,\lambda_{\varepsilon}}
(q^{-\theta_*-\varepsilon \sigma})
f_{\lambda_{\varepsilon}}(q^{\varepsilon\sigma})
\!\prod\limits_{\varepsilon,\varepsilon'=\pm}\!
N_{\lambda_{\varepsilon},\varnothing}(q^{\varepsilon\sigma-\theta_t-\varepsilon'\theta_0})}
{\prod\limits_{\varepsilon,\varepsilon'=\pm}N_{\lambda_\varepsilon,\lambda_{\varepsilon'}}(q^{(\varepsilon-\varepsilon')\sigma})} ,
\end{gather*}
where
\begin{gather*}
f_{\lambda}(u)=\prod_{\square\in\lambda}\big(-q^{\ell_\lambda(\square)+a_\varnothing(\square)+1}u^{-1}\big).
\end{gather*}
We remark that the factor $f_{\lambda}(u)$ corresponds to the five-dimensional Chern--Simons term. The Chern--Simons term in~\cite{T} reads as
 \begin{gather*}
 \exp\bigg({-}\beta \sum_k \sum_{(i,j)\in Y_k} (a_k+\epsilon (i-j))\bigg),
 \end{gather*}
where $\beta$, $a_k$ are parameters and $Y_1,\dots, Y_N$ are Young tableaux labelling the fixed points. See~\cite{T} for the details. Since
\begin{gather*}
\sum_{\square\in\lambda}\ell_\lambda(\square)+a_\varnothing(\square)+1= \sum_{(i,j)\in\lambda}\lambda_j'-i-j+1 =\sum_{(i,j)\in\lambda}i-j,
\end{gather*}
they coincide when $N=2$. It is possible to remove $f_{\lambda_\varepsilon}(q^{\varepsilon\sigma})$ from $Z_\mathrm{V} [\theta_*,\theta_t,\theta_0\,|\,\sigma,t ]$ by change of variables. Because if we set
\begin{gather*}
Z_\mathrm{V}^{CS=0} [\theta_*,\theta_t,\theta_0\,|\,\sigma,t ]
=\sum_{(\lambda_+,\lambda_-)\in\mathbb{Y}^2}
t^{|\lambda_+|+|\lambda_-|}
\frac{\prod\limits_{\varepsilon=\pm}N_{\varnothing,\lambda_{\varepsilon}}
\big(q^{-\theta_*-\varepsilon \sigma}\big)
\prod\limits_{\varepsilon,\varepsilon'=\pm}
N_{\lambda_{\varepsilon},\varnothing}\big(q^{\varepsilon\sigma-\theta_t-\varepsilon'\theta_0}\big)}
{\prod\limits_{\varepsilon,\varepsilon'=\pm}N_{\lambda_\varepsilon,\lambda_{\varepsilon'}}(q^{(\varepsilon-\varepsilon')\sigma})},
\end{gather*}
then we have
\begin{gather*}
Z_\mathrm{V} [\theta_*,\theta_t,\theta_0\,|\,\sigma,t ]
=Z_\mathrm{V}^{CS=0}\big[{-}\theta_*,-\theta_t,\theta_0\,|\,\sigma,q^{-\theta_*-2\theta_t}t\big]
\end{gather*}
from the relations $N_{\varnothing,\lambda}(u)=f_\lambda\big(u^{-1}\big)N_{\lambda,\varnothing}\big(u^{-1}\big)$, $N_{\lambda,\varnothing}(u)=f_\lambda(u)^{-1}N_{\varnothing,\lambda}\big(u^{-1}\big)$,
and $N_{\lambda,\mu}(u)=N_{\mu',\lambda'}(u)$ \cite[Lemma~A.2]{JNS}.

We define tau functions for $q$-$\mathrm{P_V}$ by
\begin{alignat*}{3}
&\tau_1^{\mathrm{V}}=\tau^{\mathrm{V}}\big(\theta_*-\tfrac{1}{2}, \theta_t,\theta_0\,|\, s,\sigma,t/\sqrt{q}\big),\qquad&&
\tau_2^{\mathrm{V}}=\tau^{\mathrm{V}}\big(\theta_*+\tfrac{1}{2}, \theta_t,\theta_0\,|\, s,\sigma,\sqrt{q}t\big),&\\
&\tau_3^{\mathrm{V}}=\tau^{\mathrm{V}}\big(\theta_*, \theta_t,\theta_0+\tfrac{1}{2}\,|\, s,\sigma+\tfrac{1}{2},t\big),\qquad&&
\tau_4^{\mathrm{V}}=\tau^{\mathrm{V}}\big(\theta_*, \theta_t,\theta_0-\tfrac{1}{2}\,|\, s,\sigma-\tfrac{1}{2},t\big),&\\
&\tau_5^{\mathrm{V}}=\tau^{\mathrm{V}}\big(\theta_*, \theta_t-\tfrac{1}{2},\theta_0\,|\, s,\sigma+\tfrac{1}{2},t\big),\qquad&&
\tau_6^{\mathrm{V}}=\tau^{\mathrm{V}}\big(\theta_*, \theta_t+\tfrac{1}{2},\theta_0\,|\, s,\sigma-\tfrac{1}{2},t\big).&
\end{alignat*}
Let
\begin{gather*}
C_1= C_6=(q-1)^{-\sigma^2}
q^{-(\Lambda+1/2)(\sigma^2-\theta_t^2-\theta_0^2)}
\prod_{\varepsilon=\pm}G_q\big(\tfrac{1}{2}-\Lambda+\varepsilon \sigma\big)^{-1},\\
C_2= C_5=(q-1)^{-\sigma^2}
q^{-(\Lambda-1/2)(\sigma^2-\theta_t^2-\theta_0^2)}
\prod_{\varepsilon=\pm}G_q\big(\tfrac{3}{2}-\Lambda+\varepsilon \sigma\big)^{-1},\\
C_3= (q-1)^{-(\sigma+1/2)^2}
q^{-\Lambda((\sigma+1/2)^2-\theta_t^2-(\theta_0+1/2)^2)}
\prod_{\varepsilon=\pm}G_q\big(1-\Lambda+\varepsilon \big(\sigma+\tfrac{1}{2}\big)\big)^{-1},\\
C_4= (q-1)^{-(\sigma-1/2)^2}
q^{-\Lambda((\sigma-1/2)^2-\theta_t^2-(\theta_0-1/2)^2)}
\prod_{\varepsilon=\pm}G_q\big(1-\Lambda+\varepsilon\big( \sigma-\tfrac{1}{2}\big)\big)^{-1},\\
C_7= (q-1)^{-(\sigma+1/2)^2}
q^{-\Lambda((\sigma+1/2)^2-(\theta_t-1/2)^2-\theta_0^2)}
\prod_{\varepsilon=\pm}G_q\big(1-\Lambda+\varepsilon \big(\sigma+\tfrac{1}{2}\big)\big)^{-1},\\
C_8= (q-1)^{-(\sigma-1/2)^2}
q^{-\Lambda((\sigma-1/2)^2-(\theta_t+1/2)^2-\theta_0^2)}
\prod_{\varepsilon=\pm}G_q\big(1-\Lambda+\varepsilon \big(\sigma-\tfrac{1}{2}\big)\big)^{-1}.
\end{gather*}

\begin{Proposition}\label{prop_qPV_tau} Set
\begin{gather}
 \theta_1+\theta_\infty=\Lambda,\qquad \theta_1-\theta_\infty=\theta_*,\qquad t = q^{\Lambda}t_1,\nonumber\\
 s = \tilde s (q-1)^{-2\sigma}q^{-2\sigma\Lambda}
\prod_{\varepsilon=\pm}\Gamma_q\big(\tfrac{1}{2}-\Lambda+\varepsilon\sigma\big)^{-\varepsilon}.\label{eq_limit-V1}
\end{gather}
Then we have
\begin{alignat*}{3}
& C_i\tau_i^{\mathrm{VI}}(\theta_\infty,\theta_1,\theta_t,\theta_0\,|\, s,\sigma,t) \to\tau_i^{\mathrm{V}}(\theta_*,\theta_t,\theta_0\,|\, \tilde s, \sigma,t_1),\qquad && i=1,2,3,4,&\\
& C_5\tau_5^{\mathrm{VI}}(\theta_\infty,\theta_1,\theta_t,\theta_0\,|\, s,\sigma,t)\to \tau_1^{\mathrm{V}}(\theta_*,\theta_t,\theta_0\,|\, \tilde s, \sigma,qt_1),&&&\\
& C_6\tau_6^{\mathrm{VI}}(\theta_\infty,\theta_1,\theta_t,\theta_0\,|\, s,\sigma,t)\to \tau_2^{\mathrm{V}}(\theta_*,\theta_t,\theta_0\,|\, \tilde s, \sigma,t_1/q),&&&\\
& C_i\tau_i^{\mathrm{VI}}(\theta_\infty,\theta_1,\theta_t,\theta_0\,|\, s,\sigma,t)\to \tau_{i-2}^{\mathrm{V}}(\theta_*,\theta_t,\theta_0\,|\, \tilde s, \sigma,t_1),\qquad && i=7,8,&
\end{alignat*}
as $\Lambda\to\infty$. Here, we denote by $\tau_i^{\mathrm{VI}}(\theta_\infty,\theta_1,\theta_t,\theta_0\,|\, s,\sigma,t)$ the tau functions of $q$-$\mathrm{P_{VI}}$ presented in the previous section.
\end{Proposition}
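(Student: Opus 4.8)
The plan is to prove each of the four displayed convergences termwise: first in the Fourier index $n\in\mathbb Z$ of the outer sum defining $\tau^{\mathrm{VI}}$, and then, inside each $n$-term, in the pair of partitions $\boldsymbol\lambda=(\lambda_+,\lambda_-)\in\mathbb Y^2$ of the Nekrasov sum~$Z$. I would carry out the generic case $C_1\tau_1^{\mathrm{VI}}\to\tau_1^{\mathrm V}$ in full detail and run the other seven in parallel, the only difference being the translation, via $\Lambda=\theta_1+\theta_\infty$ and $\theta_*=\theta_1-\theta_\infty$, of the half-integer shifts of $(\theta_\infty,\theta_1,\theta_t,\theta_0,\sigma)$ built into each $\tau_i^{\mathrm{VI}}$ into shifts of $(\Lambda,\theta_*,\theta_t,\theta_0,\sigma)$; the case $n=0$ is trivial and $n<0$ is symmetric to $n>0$.

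For the Nekrasov factor $Z[\cdots|\sigma+n,t]$, after the substitution \eqref{eq_limit-V1} the whole $\Lambda$-dependence sits in the numerator factors $N_{\varnothing,\lambda_{\varepsilon'}}\big(q^{-\theta_\infty-\theta_1-\varepsilon'(\sigma+n)}\big)$ and in $t^{|\boldsymbol\lambda|}$. Since $|q|<1$ the argument $q^{-\Lambda-\varepsilon'(\sigma+n)}$ (or $q^{-\Lambda\mp 1/2-\varepsilon'(\sigma+n)}$ after the shift relevant to $\tau_i$) tends to $\infty$; from $N_{\varnothing,\lambda}(u)=f_\lambda(u^{-1})N_{\lambda,\varnothing}(u^{-1})$ and $N_{\lambda,\varnothing}(0)=1$ (\cite[Lemma~A.2]{JNS}) one gets $N_{\varnothing,\lambda}(u)=f_\lambda(u^{-1})\big(1+O(u^{-1})\big)$, and $f_\lambda\big(q^{\Lambda+\varepsilon'(\sigma+n)}\big)=q^{-\Lambda|\lambda|}f_\lambda\big(q^{\varepsilon'(\sigma+n)}\big)$, the exponent being supplied by the identity $\sum_{\square\in\lambda}\big(\ell_\lambda(\square)+a_\varnothing(\square)+1\big)=\sum_{(i,j)\in\lambda}(i-j)$ already recorded in the text. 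The resulting factor $q^{-\Lambda|\boldsymbol\lambda|}$ cancels $t^{|\boldsymbol\lambda|}=(q^{\Lambda}t_1)^{|\boldsymbol\lambda|}$ up to the residual $q^{\pm|\boldsymbol\lambda|/2}$ or $q^{\mp|\boldsymbol\lambda|}$ coming from the $\mp\tfrac12$ or $\pm1$ in the $\Lambda$-shift --- exactly the $t_1$-rescaling appearing in the definitions of $\tau_1^{\mathrm V},\tau_2^{\mathrm V}$ and in the statements for $\tau_5^{\mathrm{VI}},\tau_6^{\mathrm{VI}}$. What remains, $\prod_{\varepsilon'}f_{\lambda_{\varepsilon'}}\big(q^{\varepsilon'(\sigma+n)}\big)$, is the Chern--Simons factor of $Z_\mathrm V$, and the $\varepsilon=+$ factors $N_{\varnothing,\lambda_{\varepsilon'}}\big(q^{\theta_\infty-\theta_1-\varepsilon'(\sigma+n)}\big)=N_{\varnothing,\lambda_{\varepsilon'}}\big(q^{-\theta_*-\varepsilon'(\sigma+n)}\big)$ are, by the definition of $\theta_*$, the remaining factors of $Z_\mathrm V$; hence $Z_\mathrm{VI}\to Z_\mathrm V$ termwise, with the correct $t_1$-argument in each case.

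For the prefactor I would first record the identity, valid for every Fourier value $\mu$ of the parameter, $C[\cdots|\mu]=(q-1)^{\mu^2}\prod_{\varepsilon=\pm}G_q(1-\Lambda'+\varepsilon\mu)\cdot C_{\mathrm V}[\cdots|\mu]$, where $\Lambda'$ is $\Lambda$ with the half-integer shift of $\tau_i$; it follows from $\prod_\varepsilon G_q(1+2\varepsilon\mu)=G_q(1+2\mu)G_q(1-2\mu)$ after splitting the $\varepsilon$-product in $C$ into its $\theta_*$-part and its $\Lambda$-part. Substituting this, together with $Z_\mathrm{VI}\to Z_\mathrm V$ and $t=q^{\Lambda}t_1$, into $C_i\tau_i^{\mathrm{VI}}$ reduces everything but a scalar to the $n$-th term of $\tau_i^{\mathrm V}$; for that scalar one telescopes $\prod_\varepsilon G_q(1-\Lambda'+\varepsilon\mu)$ against its $n=0$ value (which $C_i$ divides out) into a finite product of $\Gamma_q$'s by $G_q(u+1)=\Gamma_q(u)G_q(u)$, combines these with the $\prod_\varepsilon\Gamma_q(\tfrac12-\Lambda+\varepsilon\sigma)^{-\varepsilon n}$ supplied by the $s$-substitution into a product of $[\,\cdot\,]$'s by $\Gamma_q(u+1)=[u]\Gamma_q(u)$, and uses $[v+c]\sim -q^{v+c}/(1-q)$ as $v\to-\infty$ to see that all powers of $q^{\Lambda}$ and of $(q-1)$ cancel against the $q^{-\Lambda(\cdots)}$ and $(q-1)^{-\mu_0^2}$ built into $C_i$ (here $\mu_0$ is the $n=0$ value of $\mu$) and against $(q-1)^{\mu^2}$, leaving $\tilde s^{\,n}$ times the power of $q$ that the $n$-th term of $\tau_i^{\mathrm V}$ carries through its internal $t$-rescaling (trivial for $\tau_3,\tau_4,\tau_7,\tau_8$). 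The arithmetic rests on an exponent identity such as $\sum_{k=0}^{n-1}\sum_{i=0}^{k-1}i-\sum_{j=1}^{n}\sum_{i=1}^{j}i=-n^2$ and its half-integer-shift variant.

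Finally I would match the eight cases against the normalizations: e.g.\ $\tau_1^{\mathrm{VI}}$ has $\theta_\infty\mapsto\theta_\infty+\tfrac12$, hence $\Lambda'=\Lambda+\tfrac12$, $\theta_*\mapsto\theta_*-\tfrac12$, $t=q^{\Lambda}t_1=q^{\Lambda'}(t_1/\sqrt q)$, matching the internal $t\mapsto t/\sqrt q$ of $\tau_1^{\mathrm V}$ and the $G_q(\tfrac12-\Lambda+\varepsilon\sigma)^{-1}$ in $C_1$; $\tau_5^{\mathrm{VI}}$ has $\theta_1\mapsto\theta_1-\tfrac12$, giving $\Lambda'=\Lambda-\tfrac12$ and the argument $qt_1$; for $\tau_3,\tau_4,\tau_7,\tau_8$ one has $\Lambda'=\Lambda$ together with a half-integer shift of the Fourier variable, which matches the internal $\sigma\mapsto\sigma\pm\tfrac12$ of $\tau_3^{\mathrm V},\tau_4^{\mathrm V}$ and the $G_q(1-\Lambda+\varepsilon(\sigma\pm\tfrac12))^{-1}$ in $C_3,\dots,C_8$; in every case the exact form of $C_i$ --- the $q^{-\Lambda(\cdots)}$ exponent, the $(q-1)^{-\sigma^2}$ or $(q-1)^{-(\sigma\pm1/2)^2}$, and the reciprocal $q$-Barnes factors --- is precisely what the computation requires. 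The main obstacle I anticipate is twofold: organizing the $q$-Gamma/$q$-Barnes bookkeeping so that the divergent factors provably telescope and cancel (each step is mechanical, but it must be tracked carefully), and justifying the interchange of $\lim_{\Lambda\to\infty}$ with the double sum over $n$ and $\boldsymbol\lambda$, which requires a tail estimate uniform in $\Lambda$; consistently with the status of Theorem~\ref{thm_qPVI}, one may instead simply assert the termwise limit.
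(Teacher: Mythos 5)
Your proposal is correct and follows essentially the same route as the paper's proof: the series part is handled by the large-argument asymptotics of $N_{\varnothing,\lambda}$ producing the Chern--Simons factor $f_\lambda$ (your detour through $N_{\varnothing,\lambda}(u)=f_\lambda\big(u^{-1}\big)N_{\lambda,\varnothing}\big(u^{-1}\big)$ is equivalent to the paper's direct computation of $N_{\varnothing,\lambda}\big(q^{-\Lambda}u\big)q^{\Lambda|\lambda|}$), and the coefficient part by the same $G_q/\Gamma_q$ recursions (the paper's identity~\eqref{eq_Gqn}) together with the $q$-number asymptotics, the paper likewise treating only the termwise limit. One cosmetic slip: the exponent in $f_\lambda\big(q^{\Lambda}v\big)=q^{-\Lambda|\lambda|}f_\lambda(v)$ comes from the homogeneity $f_\lambda(cu)=c^{-|\lambda|}f_\lambda(u)$, not from the identity $\sum_{\square\in\lambda}\big(\ell_\lambda(\square)+a_\varnothing(\square)+1\big)=\sum_{(i,j)\in\lambda}(i-j)$.
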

\begin{proof} First, we verify the limit of the series part. For any partition $\lambda$ we have
\begin{gather*}
N_{\varnothing,\lambda}\big(q^{-\Lambda}u\big)q^{\Lambda|\lambda|} =\prod_{\square\in\lambda}
\big( q^{\Lambda}-q^{\ell_\lambda(\square)+a_\varnothing(\square)+1}u\big) \to f_{\lambda}\big(u^{-1}\big) ,\qquad \Lambda\to \infty.
\end{gather*}
Hence, the series $Z\left[\begin{matrix}
 \theta_1 & \theta_t\\
 \theta_\infty & \theta_0\\
\end{matrix}\Bigl|
\sigma,t\right]$ goes to
$Z_{\mathrm{V}}[\theta_*,\theta_t,\theta_0\,|\,\sigma,t]$
as $\Lambda\to \infty$.

Second, we examine the limits of the coefficients of $Z$. By the identities \eqref{eq_Gamma_Barnes} on $q$-Gamma function and $q$-Barnes function, for $n\in\mathbb{Z}$ we have
\begin{gather}
\prod_{\varepsilon=\pm}G_q(1-x+\varepsilon(\sigma+n))
= \prod_{\varepsilon=\pm}G_q(1-x+\varepsilon\sigma)
\Gamma_q(-x+\varepsilon\sigma)^{\varepsilon n}\prod_{i=0}^{|n|-1} \left[-x+\frac{|n|}{n}\sigma\right] \nonumber\\
\hphantom{\prod_{\varepsilon=\pm}G_q(1-x+\varepsilon(\sigma+n))=}{}
\times \prod_{i=0}^{|n|-1}\prod_{j=1}^{i}[-x+\sigma +j]\prod_{i=0}^{|n|-1}\prod_{j=1}^{i} [-x-\sigma -j].\label{eq_Gqn}
\end{gather}

Using the identity above, we compute the coefficient of $Z$ in $\tau_1^{\mathrm{VI}}$ multiplied by $C_1$ as follows
\begin{gather*}
 C_1s^n C\left[\begin{matrix}
\theta_1 & \theta_t \\
\theta_\infty + \tfrac{1}{2} & \theta_0 \\
\end{matrix}\Bigl|
\sigma+n\right]
t^{(\sigma+n)^2-\theta_t^2-\theta_0^2}
\\
\qquad{} =\tilde{s}^n (q-1)^{\sigma^2-2\sigma n}q^{-(\sigma^2-\theta_t^2-\theta_0^2)/2}
t_1^{(\sigma+n)^2-\theta_t^2-\theta_0^2} q^{\Lambda n^2}
\prod_{\varepsilon=\pm}\left(\frac{\Gamma_q\big({-}\Lambda-\tfrac{1}{2}+\varepsilon \sigma\big)}
{\Gamma_q\big({-}\Lambda+\tfrac{1}{2}+\varepsilon \sigma\big)} \right)^{\varepsilon n}
\\
\qquad\quad{} \times \prod_{i=0}^{|n|-1}\left[-\Lambda-\tfrac{1}{2}+\frac{|n|}{n}\sigma\right]
\prod_{i=0}^{|n|-1}\prod_{j=1}^{i}
[-\Lambda-{\tfrac{1}{2}}+\sigma +j]
\prod_{i=0}^{|n|-1}\prod_{j=1}^{i}
\big[{-}\Lambda-\tfrac{1}{2}-\sigma -j\big]
\\
\qquad\quad{} \times\frac{\prod\limits_{\varepsilon=\pm} G_q(1-\theta_*-{\tfrac{1}{2}}+\varepsilon(\sigma+n))\prod\limits_{\varepsilon,\varepsilon'=\pm}G_q(1+\varepsilon(\sigma+n)-\theta_t+\varepsilon'\theta_0)}
{G_q(1+2(\sigma+n))G_q(1-2(\sigma+n))}.
\end{gather*}
Then we have as $\Lambda\to \infty$ by the definition of $q$-number
\begin{gather*}
 q^{\Lambda n^2}\prod_{i=0}^{|n|-1}\big[{-}\Lambda-\tfrac{1}{2}+\frac{|n|}{n}\sigma\big]
\prod_{i=0}^{|n|-1}\prod_{j=1}^{i}\big[{-}\Lambda-{\tfrac{1}{2}}+\sigma +j\big]\prod_{i=0}^{|n|-1}\prod_{j=1}^{i}\big[{-}\Lambda-\tfrac{1}{2}-\sigma -j\big]
\\
\qquad{} \to (q-1)^{-n^2}\prod_{i=0}^{|n|-1}q^{-1/2+|n|\sigma/n}\prod_{i=0}^{|n|-1}\prod_{j=1}^{i}q^{-1/2+\sigma +j} \prod_{i=0}^{|n|-1}\prod_{j=1}^{i}q^{-1/2-\sigma -j}
\\
\qquad{} =(q-1)^{-n^2}q^{-n^2/2+\sigma n},
\end{gather*}
and by the identity \eqref{eq_Gamma_Barnes} of $q$-Gamma function
\begin{gather*}
\prod_{\varepsilon=\pm}\left(\frac{\Gamma_q(-\Lambda-\tfrac{1}{2}+\varepsilon \sigma)}
{\Gamma_q(-\Lambda+\tfrac{1}{2}+\varepsilon \sigma)} \right)^{\varepsilon n}
= \left(\frac{[-\Lambda-\tfrac{1}{2}-\sigma]}{[-\Lambda-\tfrac{1}{2}+\sigma]}\right)^n
\to q^{-2\sigma n}.
\end{gather*}
Therefore we obtain
\begin{gather*}
 C_1s^n C\left[\begin{matrix}
\theta_1 & \theta_t \\
\theta_\infty + \tfrac{1}{2} & \theta_0 \\
\end{matrix}\Bigl|
\sigma+n\right]
t^{(\sigma+n)^2-\theta_t^2-\theta_0^2}\\
\qquad{} \to \tilde{s}^n \big(t_1/\sqrt{q}\big)^{(\sigma+n)^2-\theta_t^2-\theta_0^2}
C_{\mathrm{V}}\big[\theta_*-\tfrac{1}{2},\theta_t,\theta_0\,|\,\sigma+n\big]
\end{gather*}
as $\Lambda \to \infty$. Similarly, we can compute the coefficients of $Z$ in the other tau functions and obtain the desired results.
\end{proof}

In what follows, we abbreviate $\tau_i^{\mathrm{V}}(\theta_*,\theta_t,\theta_0\,|\, s,\sigma,t)$ to $\tau_i$.
\begin{Theorem}\label{thm_qPV}
The functions
\begin{gather*}
 y=q^{-\theta_*-1}(q-1)^{1/2}t\frac{\tau_3\tau_4}{\tau_1\tau_2},\qquad
z=-\frac{\underline{\tau_1}\tau_2-\tau_1\underline{\tau_2}}{q^{\theta_*/2+1/2}\tau_1\underline{\tau_2}}
\end{gather*}
solves the $q$-Painlev\'e V equation
\begin{gather}
\frac{y \overline y}{a_3 a_4}=-\frac{(\overline z-b_1 t)(\overline z -b_2 t)}{\overline z-b_3},\qquad
\frac{z \overline z}{b_3}=-\frac{(y-a_1t)(y-a_2 t)}{a_4(y-a_3)}\label{eq_qPV}
\end{gather}
with the parameters
\begin{gather*}
 a_1=q^{-\theta_*-1},\qquad a_2=q^{-2\theta_t-\theta_*-1},\qquad a_3=q^{-1}, \qquad
a_4=q^{-3\theta_*/2-1/2},\\
b_1=q^{-\theta_0-\theta_t-\theta_*/2},\qquad b_2=q^{\theta_0-\theta_t-\theta_*/2},\qquad b_3=q^{-\theta_*/2-1/2}.
\end{gather*}
\end{Theorem}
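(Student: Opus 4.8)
The plan is to obtain the statement as the $\Lambda\to\infty$ limit of Theorem~\ref{thm_qPVI} under the specialization of Proposition~\ref{prop_qPV_tau}. Set $\theta_1=(\Lambda+\theta_*)/2$, $\theta_\infty=(\Lambda-\theta_*)/2$, $t=q^{\Lambda}t_1$, and let $s$ be as in~\eqref{eq_limit-V1}; write $y_\Lambda,z_\Lambda$ for the functions~\eqref{eq_yz_thm} under this substitution, regarded as functions of $t_1$. By Theorem~\ref{thm_qPVI}, $(y_\Lambda,z_\Lambda)$ solves~\eqref{eq_qPVI} with the $\Lambda$-dependent parameters read off from the substitution: $a_1=q^{-\theta_*-1}$, $a_2=q^{-2\theta_t-\theta_*-1}$, $a_3=q^{-1}$, $b_1t=q^{-\theta_0-\theta_t-\theta_*/2}t_1$ and $b_2t=q^{\theta_0-\theta_t-\theta_*/2}t_1$ stay finite, whereas $a_4=q^{-\Lambda-\theta_*-1}$, $b_3=q^{(\Lambda-\theta_*)/2-1/2}$ and $b_4=q^{-(\Lambda-\theta_*)/2-1/2}$ diverge.

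The first step is purely computational. I would substitute $y_\Lambda=\hat y_\Lambda$, $z_\Lambda=q^{\Lambda/2}\hat z_\Lambda$ into~\eqref{eq_qPVI}, pull out the common factor $q^{\Lambda/2}$ carried by each of $\overline z-b_1t$, $\overline z-b_2t$, $\overline z-b_3$ and the explicit diverging powers of $q$ carried by $\overline z-b_4$ and $y-a_4$, combine them with the prefactors $(a_3a_4)^{-1}$ and $(b_3b_4)^{-1}$, and check that the resulting $\Lambda$-family of algebraic relations in $(\hat y_\Lambda,\hat z_\Lambda,t_1)$ converges, as $\Lambda\to\infty$, to~\eqref{eq_qPV} with exactly the parameters $a_1,\dots,a_4,b_1,b_2,b_3$ of the statement. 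The rescaling $z_\Lambda=q^{\Lambda/2}\hat z_\Lambda$ is precisely the one for which the $\overline z$-equation has a finite nontrivial limit; with it, $\overline z-b_4$ and $y-a_4$ contribute in the limit only a sign and the factor $a_4^{-1}$, which is the source of the two minus signs of~\eqref{eq_qPV} and of the $a_4^{-1}$ on the right-hand side of its second equation.

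For the tau-function formulas I would use Proposition~\ref{prop_qPV_tau}. For $y$, write $q^{-2\theta_1-1}t\,\frac{\tau_3^{\mathrm{VI}}\tau_4^{\mathrm{VI}}}{\tau_1^{\mathrm{VI}}\tau_2^{\mathrm{VI}}}=q^{-\theta_*-1}t_1\cdot\frac{C_1C_2}{C_3C_4}\cdot\frac{(C_3\tau_3^{\mathrm{VI}})(C_4\tau_4^{\mathrm{VI}})}{(C_1\tau_1^{\mathrm{VI}})(C_2\tau_2^{\mathrm{VI}})}$; by Proposition~\ref{prop_qPV_tau} the last factor tends to $\tau_3\tau_4/(\tau_1\tau_2)$, and using $G_q(u+1)=\Gamma_q(u)G_q(u)$ one checks that the $q$-Barnes factors of $C_1C_2$ and of $C_3C_4$ are equal and the powers of $q$ cancel, so that $C_1C_2/(C_3C_4)=(q-1)^{1/2}$; this yields the stated formula for $y$, including the $(q-1)^{1/2}$. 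For $z$, note that $q^{1/2+\theta_\infty}\to0$ while $q^{1/2-\theta_\infty}=q^{1/2+\theta_*/2}q^{-\Lambda/2}\to\infty$, so in the denominator of~\eqref{eq_yz_thm} the term $q^{1/2-\theta_\infty}\tau_1^{\mathrm{VI}}\underline{\tau_2^{\mathrm{VI}}}$ dominates; inserting $C_1C_2$ in numerator and denominator and applying Proposition~\ref{prop_qPV_tau} (locally uniformly in $t_1$, so that also $C_i\underline{\tau_i^{\mathrm{VI}}}\to\underline{\tau_i}$) gives $q^{-\Lambda/2}z_\Lambda\to-\frac{\underline{\tau_1}\tau_2-\tau_1\underline{\tau_2}}{q^{\theta_*/2+1/2}\tau_1\underline{\tau_2}}$, i.e.\ $\hat z_\Lambda\to z$.

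Finally I would combine the two steps: since the convergences of Proposition~\ref{prop_qPV_tau} are locally uniform in $t_1$ (hence valid also at the shifted point $qt_1$) and the limit tau functions are not identically zero, the functions $y,z$ in the statement are well-defined, $(\hat y_\Lambda,\hat z_\Lambda)\to(y,z)$ locally uniformly, and passing to the limit in the $\Lambda$-family of equations from the first step shows that $(y,z)$ solves~\eqref{eq_qPV}; renaming $t_1$ to $t$ completes the proof. I expect the main obstacle to be bookkeeping rather than conceptual: tracking the half-integer shifts of $\sigma,\theta_t,\theta_0$ and the $\sqrt q$-shifts of the time variable built into the definitions of the $\tau_i$, and accounting for every $(q-1)$ and every power of $q$ produced by the normalizers $C_i$ so that both the formula for $y$ and the parameters come out exactly as stated. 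A secondary point requiring care is the standard but not automatic fact that a limit of solutions of $q$-$\mathrm{P_{VI}}$ solves the limiting equation, which is where the local uniformity of the convergence and the non-vanishing of $\tau_1\tau_2$ and $\tau_1\underline{\tau_2}$ enter.
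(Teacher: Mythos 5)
Your proposal is correct and follows essentially the same route as the paper: the paper likewise uses the relation $C_1C_2=(q-1)^{1/2}C_3C_4$ together with Proposition~\ref{prop_qPV_tau} to conclude $y\to y_1$ and $q^{-\Lambda/2}z\to z_1$, and then substitutes the specialization~\eqref{eq_limit-V1} into~\eqref{eq_qPVI} to see the rescaled system degenerate to~\eqref{eq_qPV}. The only difference is that you make explicit the local uniformity of convergence, which the paper leaves implicit.
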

\begin{proof}By definition we have
\begin{gather*}
C_1C_2=(q-1)^{1/2}C_3C_4.
\end{gather*}
Hence, by \eqref{eq_limit-V1} the solution ($y,z$) of the $q$-Painlev\'e VI equation has the following limit
\begin{gather*}
y \to y_1=q^{-\theta_*-1}(q-1)^{1/2}t_1\frac{\tau_3\tau_4}{\tau_1\tau_2},
\qquad
q^{-\Lambda/2}z\to z_1=-\frac{\underline{\tau_1}\tau_2-\tau_1\underline{\tau_2}}{q^{\theta_*/2+1/2}\tau_1\underline{\tau_2}},\qquad \Lambda\to \infty .
\end{gather*}
Substituting \eqref{eq_limit-V1} into the $q$-Painlev\'e VI equation \eqref{eq_qPVI},
we get
\begin{gather}
 \frac{y \overline y}{q^{-\Lambda-\theta_*-2}}
=\frac{\big(\overline z- q^{-\theta_0-\theta_t +(\Lambda-\theta_*)/2}t_1\big)
\big(\overline z -q^{\theta_0-\theta_t +(\Lambda-\theta_*)/2}t_1\big)}
{\big(\overline z-q^{(\Lambda-\theta_*-1)/2}\big)\big(\overline z-q^{-(\Lambda+\theta_*+1)/2)}\big)},\label{eq_qPVI_limit1}\\
 \frac{z \overline z}{q^{-1}}=-\frac{\big(y-q^{-\theta_*-1}t_1\big)
\big(y-q^{-2\theta_t-\theta_*-1}t_1\big)}{\big(y-q^{-1}\big)\big(y-q^{-\Lambda-\theta_*-1}\big)}.\label{eq_qPVI_limit2}
\end{gather}
Hence, since $y\to y_1$, $q^{-\Lambda/2}z\to z_1$ as $\Lambda\to \infty$, the system \eqref{eq_qPVI_limit1}, \eqref{eq_qPVI_limit2} degenerate to the $q$-Painle\-v\'e~V equation~\eqref{eq_qPV} for $y=y_1$ and $z=z_1$ as $\Lambda\to \infty$.
\end{proof}

Since we also have
\begin{gather*}
C_5C_6=(q-1)^{1/2}C_7C_8,\qquad C_1C_2=C_5C_6,
\end{gather*}
we obtain the following conjecture.
\begin{Conjecture}
The tau functions $\tau_i$ $(i=1,\dots, 6)$ satisfy the following bilinear equations
\begin{gather}
 \tau_1\tau_2-q^{-\theta_*}(q-1)^{1/2}t \tau_3\tau_4-\big(1-q^{-\theta_*}t\big)\overline{\tau_1}\underline{\tau_2}=0,\label{bilin-1V}\\
 (q-1)^{-1/2}\tau_1\tau_2-\tau_3\tau_4+\big(1-q^{-\theta_*}t\big)q^{2\theta_t}\underline{\tau_5}\overline{\tau_6}=0,\label{bilin-3V}\\
 (q-1)^{-1/2}\tau_1\tau_2-q^{2\theta_t}\tau_3\tau_4+q^{2\theta_t}\tau_5 \tau_6=0,\label{bilin-4V}\\
 \tau_1\underline{\tau_2}+q^{\theta_t-1/2}(q-1)^{1/2}t\underline{\tau_5}\tau_6-\underline{\tau_1}\tau_2=0,\label{bilin-5V}\\
 (q-1)^{-1/2}\tau_1\underline{\tau_2}+q^{\theta_0+2\theta_t}\underline{\tau_5}\tau_6-q^{\theta_t} \underline{\tau_3}\tau_4=0,\label{bilin-7V}\\
 (q-1)^{-1/2}\tau_1\underline{\tau_2}+q^{-\theta_0+2\theta_t}\underline{\tau_5}\tau_6-q^{\theta_t} \tau_3\underline{\tau_4}=0 .\label{bilin-8V}
\end{gather}
Then the functions
\begin{gather*}
 y=q^{-\theta_*-1}(q-1)^{1/2}t\frac{\tau_3\tau_4}{\tau_1\tau_2},\qquad z=-q^{\theta_t-\theta_*/2-1}(q-1)^{1/2}t\frac{\underline{\tau_5}\tau_6}{\tau_1\underline{\tau_2}}
\end{gather*}
solves $q$-$\mathrm{P_V}$ \eqref{eq_qPV}.
\end{Conjecture}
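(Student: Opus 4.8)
The plan is to obtain the six bilinear relations \eqref{bilin-1V}--\eqref{bilin-8V} by degenerating the $q$-$\mathrm{P_{VI}}$ bilinear equations of Conjecture~\ref{conj_qPVI} in the same limit $\Lambda\to\infty$ that underlies Proposition~\ref{prop_qPV_tau} and Theorem~\ref{thm_qPV}. Concretely, I would impose the substitution~\eqref{eq_limit-V1}, making each $\tau_i^{\mathrm{VI}}$ a function of $\Lambda$ (besides $\sigma,\theta_t,\theta_0,\tilde s,t_1$), multiply each of \eqref{bilin-1}--\eqref{bilin-8} by the product $C_1C_2$ of the normalization constants introduced before Proposition~\ref{prop_qPV_tau}, and let $\Lambda\to\infty$. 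By Proposition~\ref{prop_qPV_tau} we have $C_i\tau_i^{\mathrm{VI}}\to\tau_i^{\mathrm{V}}$ for $i=1,2,3,4$, together with $C_5\tau_5^{\mathrm{VI}}\to\overline{\tau_1}$, $C_6\tau_6^{\mathrm{VI}}\to\underline{\tau_2}$, $C_7\tau_7^{\mathrm{VI}}\to\tau_5$, $C_8\tau_8^{\mathrm{VI}}\to\tau_6$; since none of the $C_i$ depends on $t$, these limits are compatible with the shifts $t\mapsto qt$ and $t\mapsto t/q$, so every monomial in each bilinear relation has a finite limit once the correct overall constant has been divided out.

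The bookkeeping rests on three facts, all available in the excerpt. First, $C_1C_2=C_5C_6=(q-1)^{1/2}C_3C_4=(q-1)^{1/2}C_7C_8$: the first two equalities hold because $C_1=C_6$ and $C_2=C_5$, and the last two follow from the $q$-Gamma and $q$-Barnes functional equations~\eqref{eq_Gamma_Barnes} exactly as in the proof of Theorem~\ref{thm_qPV}. Second, $2\theta_1=\Lambda+\theta_*$ and $2\theta_\infty=\Lambda-\theta_*$, so that $q^{-2\theta_1}t=q^{-\theta_*}t_1$ and $q^{-\theta_1-\theta_\infty+\theta_t-1/2}t=q^{\theta_t-1/2}t_1$ stay finite. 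Third, all remaining $t$-dependent coefficients, namely $q^{\pm2\theta_t}t=q^{\pm2\theta_t+\Lambda}t_1$ and $q^{-\theta_1+\theta_\infty+\theta_t-1/2}t=q^{-\theta_*+\theta_t-1/2+\Lambda}t_1$, carry a factor $q^\Lambda\to0$. Tracking these through each equation, I expect (after renaming $t_1,\tilde s$ back to $t,s$): $\eqref{bilin-1}\times C_1C_2\to\eqref{bilin-1V}$ and $\eqref{bilin-5}\times C_1C_2\to\eqref{bilin-5V}$, while $\eqref{bilin-3}$, $\eqref{bilin-4}$, $\eqref{bilin-7}$, $\eqref{bilin-8}$, each multiplied by $C_1C_2$ and then divided by $(q-1)^{1/2}$, give $\eqref{bilin-3V}$, $\eqref{bilin-4V}$, $\eqref{bilin-7V}$, $\eqref{bilin-8V}$ respectively. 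The two remaining $q$-$\mathrm{P_{VI}}$ equations \eqref{bilin-2} and \eqref{bilin-6} degenerate to the tautologies $\tau_1\tau_2=\tau_1\tau_2$ and $\tau_1\underline{\tau_2}=\tau_1\underline{\tau_2}$, because their only $t$-dependent terms are killed by the $q^\Lambda$ prefactor; this explains why the $q$-$\mathrm{P_V}$ list has six relations and not eight.

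Granting the six bilinear equations, the assertion about $(y,z)$ needs no further limit: $y$ is literally the function of Theorem~\ref{thm_qPV}, while \eqref{bilin-5V} rewrites the numerator of the $z$ there as $\underline{\tau_1}\tau_2-\tau_1\underline{\tau_2}=q^{\theta_t-1/2}(q-1)^{1/2}t\,\underline{\tau_5}\tau_6$, turning that $z$ into $-q^{\theta_t-\theta_*/2-1}(q-1)^{1/2}t\,\underline{\tau_5}\tau_6/(\tau_1\underline{\tau_2})$, the claimed expression; hence $(y,z)$ still solves~\eqref{eq_qPV} by Theorem~\ref{thm_qPV}.

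The genuine obstacle --- and the reason this is recorded as a conjecture rather than a theorem --- is that the entire argument is conditional on the unproved Conjecture~\ref{conj_qPVI}. An unconditional proof would require either proving Conjecture~\ref{conj_qPVI} itself or an independent argument directly at the level of the $q$-Nekrasov series, in the spirit of the direct proofs given for the $\mathrm{III}_1$, $\mathrm{III}_2$, $\mathrm{III}_3$ equations in the later sections. A secondary, more technical point is that Proposition~\ref{prop_qPV_tau} supplies only termwise convergence of the defining sums over $n\in\mathbb{Z}$ and over pairs of Young diagrams, so a fully rigorous treatment should in addition justify interchanging $\Lambda\to\infty$ with those summations --- the same analyticity gap already noted after Theorem~\ref{thm_qPVI}.
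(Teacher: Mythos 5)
Your derivation is correct and is essentially the paper's own route: the conjecture is obtained precisely by applying the substitution \eqref{eq_limit-V1} to the $q$-$\mathrm{P_{VI}}$ bilinear equations of Conjecture~\ref{conj_qPVI}, using the constant relations $C_1C_2=C_5C_6=(q-1)^{1/2}C_3C_4=(q-1)^{1/2}C_7C_8$ and the limits of Proposition~\ref{prop_qPV_tau}, with \eqref{bilin-2} and \eqref{bilin-6} degenerating to tautologies and \eqref{bilin-5V} converting the $z$ of Theorem~\ref{thm_qPV} into the stated form. You also correctly identify why this remains a conjecture: the whole argument is conditional on the unproved Conjecture~\ref{conj_qPVI} (plus the interchange of limit and summation), which is exactly the status the paper assigns to it.
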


The four-term bilinear equation \eqref{eq_qPVI_4termbilinear} admits the following limit.
\begin{Proposition}\label{prop_qPV_twotermbilinear}We have
\begin{gather}
 \underline{\tau_1}\tau_2-\tau_1\underline{\tau_2}
=\frac{q^{-1/2}(q-1)^{1/2}}{q^{\theta_0}-q^{-\theta_0}}t
 (\underline{\tau_3}\tau_4-\tau_3\underline{\tau_4} ).\label{bilin-9V}
\end{gather}
\end{Proposition}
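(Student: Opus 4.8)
The plan is to obtain \eqref{bilin-9V} as the $\Lambda\to\infty$ limit of the four-term bilinear equation \eqref{eq_qPVI_4termbilinear} for $q$-$\mathrm{P_{VI}}$, using the scaling in \eqref{eq_limit-V1} together with the convergence statements of Proposition~\ref{prop_qPV_tau}. First I would rewrite \eqref{eq_qPVI_4termbilinear} after substituting $\theta_1+\theta_\infty=\Lambda$, $\theta_1-\theta_\infty=\theta_*$; the prefactor becomes
\begin{gather*}
\frac{q^{1/2+\theta_\infty}-q^{1/2-\theta_\infty}}{q^{-\theta_0}-q^{\theta_0}}\,q^{-\theta_1-1}t
=\frac{q^{1/2}\big(q^{\theta_\infty}-q^{-\theta_\infty}\big)}{q^{-\theta_0}-q^{\theta_0}}\,q^{-(\Lambda+\theta_*)/2-1}\,q^{\Lambda}t_1,
\end{gather*}
using $\theta_1=(\Lambda+\theta_*)/2$, $\theta_\infty=(\Lambda-\theta_*)/2$ and $t=q^{\Lambda}t_1$. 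The surviving $\Lambda$-dependence in this scalar is $q^{1/2}q^{\theta_\infty}q^{-\Lambda/2}=q^{1/2}q^{-\theta_*/2}\cdot q^{\Lambda/2-\Lambda/2}$; more carefully one gets $q^{1/2}\big(q^{(\Lambda-\theta_*)/2}-q^{-(\Lambda-\theta_*)/2}\big)q^{-(\Lambda+\theta_*)/2-1}=q^{-1/2}\big(q^{-\theta_*}-q^{-\Lambda+\text{(bounded)}}\big)\to q^{-1/2}q^{-\theta_*}$, and combining with the factor $q^{-\theta_*-1}\cdots$ carefully one should land on the constant $\dfrac{q^{-1/2}(q-1)^{1/2}}{q^{\theta_0}-q^{-\theta_0}}$ appearing in \eqref{bilin-9V}, the $(q-1)^{1/2}$ coming from the normalization identity $C_1C_2=(q-1)^{1/2}C_3C_4$.

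Next I would multiply both sides of \eqref{eq_qPVI_4termbilinear} by $C_1C_2$. On the left, $C_1C_2\big(\underline{\tau_1^{\mathrm{VI}}}\tau_2^{\mathrm{VI}}-\tau_1^{\mathrm{VI}}\underline{\tau_2^{\mathrm{VI}}}\big)$: here one must be careful that $\tau_1^{\mathrm{V}}$ is built from $\tau^{\mathrm{V}}$ evaluated at $t/\sqrt q$ and $\tau_2^{\mathrm{V}}$ at $\sqrt q\,t$, so the bar/underline shifts $t\to qt$, $t\to t/q$ in the $q$-$\mathrm{P_{VI}}$ variables translate correctly to shifts in $t_1$ under Proposition~\ref{prop_qPV_tau}; the statement $C_1\tau_1^{\mathrm{VI}}\to\tau_1^{\mathrm{V}}$, $C_2\tau_2^{\mathrm{VI}}\to\tau_2^{\mathrm{V}}$ and the analogous relations with $t\to qt$, $t\to t/q$ give $C_1C_2\,\underline{\tau_1^{\mathrm{VI}}}\tau_2^{\mathrm{VI}}\to\underline{\tau_1}\tau_2$ and $C_1C_2\,\tau_1^{\mathrm{VI}}\underline{\tau_2^{\mathrm{VI}}}\to\tau_1\underline{\tau_2}$, since the $C_i$'s do not depend on the $q$-shift of $t$. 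On the right, $C_1C_2=(q-1)^{1/2}C_3C_4$ converts $C_1C_2\big(\underline{\tau_3^{\mathrm{VI}}}\tau_4^{\mathrm{VI}}-\tau_3^{\mathrm{VI}}\underline{\tau_4^{\mathrm{VI}}}\big)$ into $(q-1)^{1/2}\big(\underline{\tau_3}\tau_4-\tau_3\underline{\tau_4}\big)$ in the limit, using $C_3\tau_3^{\mathrm{VI}}\to\tau_3$, $C_4\tau_4^{\mathrm{VI}}\to\tau_4$ together with their $t$-shifted versions.

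Finally I would assemble: the limit of the left side is $\underline{\tau_1}\tau_2-\tau_1\underline{\tau_2}$, the limit of the scalar prefactor times $C_3C_4/(C_1C_2)^{-1}$ is $\dfrac{q^{-1/2}(q-1)^{1/2}}{q^{\theta_0}-q^{-\theta_0}}t$ (after rewriting $t=q^{\Lambda}t_1$ and absorbing the resulting power of $q^{\Lambda}$ against the leftover $q^{-\Lambda/2}$ from $q^{\theta_\infty-\theta_1}$-type factors — this bookkeeping is the step where sign and power-of-$q$ errors are easiest to make), and the limit of the right factor is $\underline{\tau_3}\tau_4-\tau_3\underline{\tau_4}$; renaming $t_1$ back to $t$ yields \eqref{bilin-9V}. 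The main obstacle is purely bookkeeping: tracking every power of $q^{\Lambda}$ and $q^{1/2}$ through the substitution \eqref{eq_limit-V1} and confirming they cancel to leave exactly the stated finite constant, including the sign of $q^{\theta_0}-q^{-\theta_0}$ versus $q^{-\theta_0}-q^{\theta_0}$; one also has to check that the $q$-shifts of $t$ commute with the limit, which is immediate because the normalizing constants $C_i$ are $t$-independent and the convergence in Proposition~\ref{prop_qPV_tau} holds for each fixed argument (in particular at $qt_1$ and $t_1/q$).
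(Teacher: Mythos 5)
Your strategy is exactly the paper's: the paper proves Proposition~\ref{prop_qPV_twotermbilinear} in one line, as the $\Lambda\to\infty$ limit of \eqref{eq_qPVI_4termbilinear} under the substitution \eqref{eq_limit-V1}, after multiplying through by $C_1C_2$ and using $C_1C_2=(q-1)^{1/2}C_3C_4$ together with the convergence statements of Proposition~\ref{prop_qPV_tau}. Your remarks about the $t$-shifts commuting with the limit (because the $C_i$ and $s$ are $t$-independent) are also correct and are the only point where the one-line proof needs any care.

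However, your asymptotic bookkeeping of the scalar prefactor identifies the wrong surviving term. Writing the prefactor as
\begin{gather*}
\frac{q^{1/2+\theta_\infty}-q^{1/2-\theta_\infty}}{q^{-\theta_0}-q^{\theta_0}}\,q^{-\theta_1-1}t
=\frac{q^{-1/2-\theta_*}-q^{-1/2-\Lambda}}{q^{-\theta_0}-q^{\theta_0}}\,q^{\Lambda}t_1,
\end{gather*}
and recalling that the paper fixes $|q|<1$ so that $q^{\Lambda}\to 0$ and $q^{-\Lambda}\to\infty$ as $\Lambda\to\infty$ (this is the convention already used in the proof of Proposition~\ref{prop_qPV_tau}, e.g.\ in $q^{\Lambda}-q^{\ell+a+1}u\to -q^{\ell+a+1}u$), you see that the term $q^{-1/2-\theta_*}\cdot q^{\Lambda}t_1$ tends to $0$, while the term $-q^{-1/2-\Lambda}\cdot q^{\Lambda}t_1=-q^{-1/2}t_1$ is the one that survives. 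Your claim that the bracket tends to $q^{-1/2}q^{-\theta_*}$ is the opposite: that bracket alone diverges, and the piece you keep is the one that dies after multiplication by $t=q^{\Lambda}t_1$. The correct limit is $\frac{-q^{-1/2}}{q^{-\theta_0}-q^{\theta_0}}t_1=\frac{q^{-1/2}}{q^{\theta_0}-q^{-\theta_0}}t_1$, which after the factor $(q-1)^{1/2}$ from $C_1C_2=(q-1)^{1/2}C_3C_4$ gives exactly the constant in \eqref{bilin-9V}; note in particular that no $q^{-\theta_*}$ appears in the final answer, which should have signalled that the $q^{-1/2-\theta_*}$ term cannot be the dominant one. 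With this correction the proof closes as you intended.
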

\begin{proof} The identity \eqref{bilin-9V} is a direct consequence of \eqref{eq_qPVI_4termbilinear} by the limit \eqref{eq_limit-V1} as $\Lambda\to \infty$.
\end{proof}

We remark that tau functions without the Chern--Simons term is also obtained by the limit
\begin{gather*}
 \theta_1+\theta_\infty=-\Lambda,\qquad \theta_1-\theta_\infty=\theta_*, \qquad s = \tilde s (q-1)^{-2\sigma}
\prod_{\varepsilon=\pm}\Gamma_q\big(\tfrac{1}{2}+\Lambda+\varepsilon\sigma\big)^{-\varepsilon},\qquad \Lambda\to\infty
\end{gather*}
from the tau functions of $q$-$\mathrm{P_{VI}}$.

\section[From $q$-$\mathrm{P_V}$ to $q$-$\mathrm{P_{{III}_1}}$]{From $\boldsymbol{q}$-$\boldsymbol{\mathrm{P_V}}$ to $\boldsymbol{q}$-$\boldsymbol{\mathrm{P_{{III}_1}}}$}\label{section4}

In this section, we take a limit of the tau functions of $q$-$\mathrm{P_V}$ to $q$-$\mathrm{P_{{III}_1}}$. Define the tau function by
\begin{gather*}
\tau^{\mathrm{{III}_1}}(\theta_*,\theta_\star\,|\, s,\sigma,t)= \sum_{n\in\mathbb{Z}}
s^n t^{(\sigma+n)^2}C_{\mathrm{{III}_1}} [\theta_*,\theta_\star\,|\,\sigma+n ]
 Z_{\mathrm{{III}_1}} [\theta_*,\theta_\star\,|\,\sigma+n,t ] ,
\end{gather*}
with
\begin{gather*}
C_{\mathrm{{III}_1}} [\theta_*,\theta_\star\,|\,\sigma ]= (q-1)^{-2\sigma^2}
\prod_{\varepsilon=\pm} \frac{G_q(1-\theta_*+\varepsilon\sigma)
G_q(1+\varepsilon\sigma-\theta_\star)}{G_q(1+2\varepsilon\sigma)},\\
Z_{\mathrm{{III}_1}} [\theta_*,\theta_\star\,|\,\sigma,t ]=
\sum_{(\lambda_+,\lambda_-)\in\mathbb{Y}^2}
t^{|\lambda_+|+|\lambda_-|}
\frac{\prod\limits_{\varepsilon=\pm}N_{\varnothing,\lambda_{\varepsilon}}
\big(q^{-\theta_*-\varepsilon \sigma}\big)
N_{\lambda_{\varepsilon},\varnothing}\big(q^{\varepsilon\sigma-\theta_\star}\big)}
{\prod\limits_{\varepsilon,\varepsilon'=\pm}N_{\lambda_\varepsilon,\lambda_{\varepsilon'}}\big(q^{(\varepsilon-\varepsilon')\sigma}\big)} .
\end{gather*}

Let us define the tau functions for $q$-$\mathrm{P_{III_1}}$ by
\begin{alignat*}{3}
& \tau^{\mathrm{{III}_1}}_1= \tau^{\mathrm{{III}_1}}\big(\theta_*-\tfrac{1}{2},\theta_\star\,|\, s,\sigma,t/\sqrt{q}\big),\qquad&&
\tau^{\mathrm{{III}_1}}_2=\tau^{\mathrm{{III}_1}}\big(\theta_*+\tfrac{1}{2},\theta_\star\,|\, s,\sigma,\sqrt{q}t\big),&\\
& \tau^{\mathrm{{III}_1}}_3= \tau^{\mathrm{{III}_1}}\big(\theta_*,\theta_\star-\tfrac{1}{2}\,|\, s,\sigma+\tfrac{1}{2},t/\sqrt{q}\big),\qquad&&
\tau^{\mathrm{{III}_1}}_4=\tau^{\mathrm{{III}_1}}\big(\theta_*,\theta_\star+\tfrac{1}{2}\,|\, s,\sigma-\tfrac{1}{2},\sqrt{q}t\big).&
\end{alignat*}
Put
\begin{gather*}
C_1=(q-1)^{-\sigma^2}q^{-\Lambda\sigma^2-(\theta_t^2+\theta_0^2)/2}t^{\theta_t^2+\theta_0^2}\prod_{\varepsilon=\pm}G_q (1-\Lambda+\varepsilon \sigma )^{-1},\\
C_2= (q-1)^{-\sigma^2}q^{-\Lambda\sigma^2+(\theta_t^2+\theta_0^2)/2}t^{\theta_t^2+\theta_0^2}\prod_{\varepsilon=\pm}G_q(1-\Lambda+\varepsilon \sigma)^{-1},\\
C_3=(q-1)^{-(\sigma+1/2)^2}q^{-(\Lambda+1/2)(\sigma+1/2)^2}t^{\theta_t^2+(\theta_0+1/2)^2}\prod_{\varepsilon=\pm}G_q\big(\tfrac{1}{2}-\Lambda+\varepsilon\big( \sigma+\tfrac{1}{2}\big)\big)^{-1},\\
C_4=(q-1)^{-(\sigma-1/2)^2}q^{-(\Lambda-1/2)(\sigma-1/2)^2}t^{\theta_t^2+(\theta_0-1/2)^2}
\prod_{\varepsilon=\pm}G_q\big( \tfrac{3}{2} -\Lambda+\varepsilon\big( \sigma-\tfrac{1}{2}\big)\big)^{-1},\\
C_5=(q-1)^{-(\sigma+1/2)^2}q^{-(\Lambda-1/2)(\sigma+1/2)^2}t^{(\theta_t-1/2)^2+\theta_0^2}
\prod_{\varepsilon=\pm}G_q\big( \tfrac{3}{2} -\Lambda+\varepsilon\big( \sigma+\tfrac{1}{2}\big)\big)^{-1},\\
C_6=(q-1)^{-(\sigma-1/2)^2}q^{-(\Lambda+1/2)(\sigma-1/2)^2} t^{(\theta_t+1/2)^2+\theta_0^2}
\prod_{\varepsilon=\pm}G_q\big(\tfrac{1}{2}-\Lambda+\varepsilon\big( \sigma-\tfrac{1}{2}\big)\big)^{-1}.
\end{gather*}

\begin{Proposition}Set
\begin{gather}
\theta_t+\theta_0=\Lambda,\qquad \theta_t-\theta_0=\theta_\star,\qquad t = q^{\Lambda}t_1,\nonumber\\
s = \tilde s (q-1)^{-2\sigma}q^{-\sigma(2\Lambda+1)} \prod_{\varepsilon=\pm}\Gamma_q(-\Lambda+\varepsilon\sigma)^{-\varepsilon} .\label{eq_limit-III1}
\end{gather}
Then we have
\begin{alignat*}{3}
& C_i\tau_i^{\mathrm{V}}(\theta_*,\theta_t,\theta_0\,|\, s,\sigma,t)\to \tau_i^{\mathrm{{III}_1}}(\theta_*,\theta_\star \,|\, \tilde s, \sigma,t_1), \qquad && i=1,2,3,4,& \\
& C_i\tau_{i}^{\mathrm{V}}(\theta_*,\theta_t,\theta_0\,|\, s,\sigma,t)\to\tau_{i-2}^{\mathrm{{III}_1}}(\theta_*,\theta_\star \,|\, \tilde s, \sigma,qt_1), \qquad && i=5,6,&
\end{alignat*}
as $\Lambda\to \infty$.
\end{Proposition}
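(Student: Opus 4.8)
The plan is to follow the same two-step scheme as in the proof of Proposition~\ref{prop_qPV_tau}: first identify the $\Lambda\to\infty$ limit of the combinatorial series $Z_{\mathrm V}$, and then the limit of the scalar coefficients $s^n\,t^{(\sigma+n)^2-\theta_t^2-\theta_0^2}\,C_{\mathrm V}[\,\cdot\,|\,\sigma+n]$ after multiplication by the normalising constants $C_i$. Throughout I treat $\tau_i^{\mathrm V}(\theta_*,\theta_t,\theta_0\,|\,s,\sigma,t)$ as the explicit sum obtained by inserting the shifts from the definitions of $\tau_1^{\mathrm V},\dots,\tau_6^{\mathrm V}$; the half-integer shifts of $\theta_*,\sigma$ and the $\sqrt q$-rescalings of $t$ commute with every manipulation below, so it suffices to carry out the case $i=1$ in detail. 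Convergence in $\mathbb Y^2$ and in $n\in\mathbb Z$ is interchanged with $\Lambda\to\infty$ by the same geometric-decay/dominated-convergence estimate used for $q$-$\mathrm P_{\mathrm{VI}}\to q$-$\mathrm P_{\mathrm V}$, using $|q|<1$.

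For the series, the new feature compared with Proposition~\ref{prop_qPV_tau} is that the $\Lambda$-carrying factors are now of the form $N_{\lambda_\varepsilon,\varnothing}$ rather than $N_{\varnothing,\lambda_\varepsilon}$, and there is the extra Chern--Simons factor $f_{\lambda_\varepsilon}(q^{\varepsilon\sigma})$ in the numerator of $Z_{\mathrm V}$. Under $\theta_t+\theta_0=\Lambda$, $\theta_t-\theta_0=\theta_\star$ the four factors $N_{\lambda_\varepsilon,\varnothing}(q^{\varepsilon\sigma-\theta_t+\varepsilon'\theta_0})$ split into the two stable ones $N_{\lambda_\varepsilon,\varnothing}(q^{\varepsilon\sigma-\theta_\star})$ ($\varepsilon'=-$) and the two divergent ones $N_{\lambda_\varepsilon,\varnothing}(q^{\varepsilon\sigma-\Lambda})$ ($\varepsilon'=+$). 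Writing $t=q^\Lambda t_1$ and distributing the factor $q^{\Lambda|\lambda_\varepsilon|}$ from $t^{|\boldsymbol\lambda|}$ onto the corresponding divergent factor, I use $N_{\lambda,\varnothing}(u)=N_{\varnothing,\lambda'}(u)$ \cite[Lemma~A.2]{JNS} together with the elementary limit $N_{\varnothing,\mu}(q^{-\Lambda}u)q^{\Lambda|\mu|}\to f_\mu(u^{-1})$ already established in the proof of Proposition~\ref{prop_qPV_tau}, obtaining $N_{\lambda_\varepsilon,\varnothing}(q^{\varepsilon\sigma-\Lambda})q^{\Lambda|\lambda_\varepsilon|}\to f_{\lambda_\varepsilon'}(q^{-\varepsilon\sigma})$. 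The identity $f_{\lambda'}(u^{-1})=f_\lambda(u)^{-1}$, which follows because $\sum_{\square\in\lambda}(\ell_\lambda(\square)+a_\varnothing(\square)+1)=\sum_{(i,j)\in\lambda}(i-j)$ changes sign under conjugation while $|\lambda'|=|\lambda|$, then shows that the product of these two limits cancels $\prod_\varepsilon f_{\lambda_\varepsilon}(q^{\varepsilon\sigma})$ exactly. What survives is precisely the $\boldsymbol\lambda$-summand of $Z_{\mathrm{III}_1}[\theta_*,\theta_\star\,|\,\sigma,t_1]$, namely the numerator $\prod_\varepsilon N_{\varnothing,\lambda_\varepsilon}(q^{-\theta_*-\varepsilon\sigma})N_{\lambda_\varepsilon,\varnothing}(q^{\varepsilon\sigma-\theta_\star})$, the unchanged denominator, and $t_1^{|\boldsymbol\lambda|}$.

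For the coefficients, I expand the $\Lambda$-dependent $q$-Barnes functions with \eqref{eq_Gqn}. In $C_{\mathrm V}[\,\cdot\,|\,\sigma+n]$ the only divergent factor is $\prod_\varepsilon G_q(1-\Lambda+\varepsilon(\sigma+n))$, coming from the $\varepsilon'=-$ terms of $\prod_{\varepsilon,\varepsilon'}G_q(1+\varepsilon(\sigma+n)-\theta_t+\varepsilon'\theta_0)$. Applying \eqref{eq_Gqn} with $x=\Lambda$ produces: the stable block $\prod_\varepsilon G_q(1-\Lambda+\varepsilon\sigma)$, removed by the reciprocal $q$-Barnes factors in $C_i$; the $q$-Gamma ratio $\prod_\varepsilon\Gamma_q(-\Lambda+\varepsilon\sigma)^{\varepsilon n}$, removed by the $s$-rescaling in \eqref{eq_limit-III1}; and $n^2$ $q$-numbers of the form $[-\Lambda+c]$, each contributing a finite factor times $q^{-\Lambda}$. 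The resulting $q^{-\Lambda n^2}$ combines with the $q^{-\Lambda\sigma^2}$ in $C_i$, the $q^{-2\Lambda\sigma n}$ from $s^n$, and the $q^{\Lambda(\sigma+n)^2}$ in $t^{(\sigma+n)^2}=(q^\Lambda t_1)^{(\sigma+n)^2}$, so that all powers of $q^\Lambda$ cancel; here the factor $t^{\theta_t^2+\theta_0^2}$ built into each $C_i$ is exactly what converts the $q$-$\mathrm P_{\mathrm V}$-type prefactor $t^{(\sigma+n)^2-\theta_t^2-\theta_0^2}$ into the $q$-$\mathrm P_{\mathrm{III}_1}$-type prefactor $t_1^{(\sigma+n)^2}$ (modulo the harmless $\sqrt q$-shift), while the $q^{-(\theta_t^2+\theta_0^2)/2}$ in $C_i$ absorbs the leftover $q^{(\theta_t^2+\theta_0^2)/2}$. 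Reassembling the surviving finite powers of $q$ and of $q-1$ reproduces $C_{\mathrm{III}_1}[\,\cdot\,|\,\sigma+n]$ together with $\tilde s^n t_1^{(\sigma+n)^2}$, so the $n$-th term of $C_i\tau_i^{\mathrm V}$ tends to the $n$-th term of the claimed $\tau^{\mathrm{III}_1}$. The cases $i=3,4,5,6$ are handled identically after tracking the half-integer shifts; for $i=5,6$ the relations $(\theta_t\mp\tfrac12)+\theta_0=\Lambda\mp\tfrac12$ force the limiting $t$-argument to be $qt_1$, which is why $\tau_5^{\mathrm V},\tau_6^{\mathrm V}$ degenerate onto $\tau_3^{\mathrm{III}_1},\tau_4^{\mathrm{III}_1}$ evaluated at $qt_1$.

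The main obstacle is entirely one of bookkeeping: since $t=q^\Lambda t_1$ and $\theta_t^2+\theta_0^2$ grow quadratically in $\Lambda$, the individual factors $C_i$ and $\tau_i^{\mathrm V}$ each diverge (and in fact super-exponentially), and one must check that the constants $C_i$ and the $s$-rescaling written in the statement are exactly those that make the products converge. Beyond the single new algebraic identity $f_{\lambda'}(u^{-1})=f_\lambda(u)^{-1}$ used to eliminate the Chern--Simons factor, no step is conceptually different from the proof of Proposition~\ref{prop_qPV_tau}.
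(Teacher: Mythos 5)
Your proposal is correct and follows essentially the same route as the paper: the series part converges because the $\Lambda$-divergent Nekrasov factors $N_{\lambda_\varepsilon,\varnothing}\big(q^{\varepsilon\sigma-\Lambda}\big)q^{\Lambda|\lambda_\varepsilon|}$ tend to $f_{\lambda_\varepsilon}(q^{\varepsilon\sigma})^{-1}$ and cancel the Chern--Simons factors, and the coefficients are handled with \eqref{eq_Gqn} exactly as in Proposition~\ref{prop_qPV_tau}. The only cosmetic difference is that the paper obtains the key limit by direct computation, $N_{\lambda,\varnothing}\big(q^{-\Lambda}u\big)q^{\Lambda|\lambda|}=\prod_{\square\in\lambda}\big(q^{\Lambda}-q^{-\ell_\lambda(\square)-a_\varnothing(\square)-1}u\big)\to f_\lambda(u)^{-1}$, whereas you reach the same conclusion via the conjugation identities $N_{\lambda,\varnothing}(u)=N_{\varnothing,\lambda'}(u)$ and $f_{\lambda'}\big(u^{-1}\big)=f_\lambda(u)^{-1}$.
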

\begin{proof}For any partition $\lambda$ we have
\begin{gather*}
N_{\lambda,\varnothing}\big(q^{-\Lambda}u\big)q^{\Lambda|\lambda|} =\prod_{\square\in\lambda}
\big( q^{\Lambda}-q^{-\ell_\lambda(\square)-a_\varnothing(\square)-1}u\big)
\to f_{\lambda}(u)^{-1}, \qquad \Lambda\to \infty.
\end{gather*}
Hence, the series $ Z_{\mathrm{V}} [ \theta_*,\theta_t,\theta_0\,|\, \sigma, t]$ goes to $Z_{\mathrm{III_1}} [\theta_*,\theta_\star\,|\, \sigma,t_1 ]$ as $\Lambda\to \infty$.
The coefficients of~$Z_{\mathrm{V}}$ are computed in the same way as in the proof of Proposition~\ref{prop_qPV_tau} using~\eqref{eq_Gqn} and we obtain the desired results.
\end{proof}

In what follows, we abbreviate $\tau_i^{\mathrm{III_1}}(\theta_*,\theta_\star\,|\, s,\sigma,t)$ to $\tau_i$. Fortunately, the four-term bilinear equation \eqref{bilin-9V} degenerates to a three-term bilinear equation.
\begin{Proposition}We have
\begin{gather}
\underline{\tau_1}\tau_2-\tau_1\underline{\tau_2}=q^{-1/4}t^{1/2} \tau_3\underline{\tau_4}.\label{bilin-9III_1}
\end{gather}
\end{Proposition}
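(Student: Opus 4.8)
The plan is to obtain \eqref{bilin-9III_1} by degenerating the four-term bilinear equation \eqref{bilin-9V} for $q$-$\mathrm{P_V}$ under the substitution \eqref{eq_limit-III1}, letting $\Lambda\to\infty$. On the $q$-$\mathrm{P_V}$ side write $\tau_i=\tau_i^{\mathrm V}$ and let $\tau_i^{\mathrm{III_1}}$ denote the limiting functions; note that under $t=q^{\Lambda}t_1$ the shift $t\mapsto t/q$ corresponds to the $q$-$\mathrm{P_{III_1}}$ shift $t_1\mapsto t_1/q$, so $\underline{C_i}\,\underline{\tau_i^{\mathrm V}}\to\underline{\tau_i^{\mathrm{III_1}}}$ follows from the preceding Proposition whenever $C_i\tau_i^{\mathrm V}\to\tau_i^{\mathrm{III_1}}$. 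First I would multiply \eqref{bilin-9V} through by $F:=\underline{C_1}C_2$; since $C_1$ and $C_2$ carry the same $t$-dependence $t^{\theta_t^2+\theta_0^2}$, one has the exact identity $F=\underline{C_1}C_2=C_1\underline{C_2}$, so the left-hand side becomes $(\underline{C_1}\,\underline{\tau_1})(C_2\tau_2)-(C_1\tau_1)(\underline{C_2}\,\underline{\tau_2})$, which converges to $\underline{\tau_1^{\mathrm{III_1}}}\tau_2^{\mathrm{III_1}}-\tau_1^{\mathrm{III_1}}\underline{\tau_2^{\mathrm{III_1}}}$.

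For the right-hand side I would split the two summands as $\frac{q^{-1/2}(q-1)^{1/2}}{q^{\theta_0}-q^{-\theta_0}}tF\cdot\underline{\tau_3}\tau_4=A_\Lambda(\underline{C_3}\,\underline{\tau_3})(C_4\tau_4)$ and $\frac{q^{-1/2}(q-1)^{1/2}}{q^{\theta_0}-q^{-\theta_0}}tF\cdot\tau_3\underline{\tau_4}=B_\Lambda(C_3\tau_3)(\underline{C_4}\,\underline{\tau_4})$, where $A_\Lambda=\frac{q^{-1/2}(q-1)^{1/2}tF}{(q^{\theta_0}-q^{-\theta_0})\underline{C_3}C_4}$ and $B_\Lambda=\frac{q^{-1/2}(q-1)^{1/2}tF}{(q^{\theta_0}-q^{-\theta_0})C_3\underline{C_4}}$, and then evaluate the limits of the scalars $A_\Lambda,B_\Lambda$. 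Collecting the powers of $q-1$, of $q^{\Lambda}$ and of $t$ in $F/(C_3\underline{C_4})$ from the definitions of $C_1,\dots,C_6$ is routine; the $q$-Barnes factors are the only delicate point. There the $G_q(1-\Lambda\pm\sigma)$ cancel and one is left with $G_q(-\Lambda-\sigma)G_q(2-\Lambda-\sigma)/G_q(1-\Lambda-\sigma)^2$, which telescopes via the recursions \eqref{eq_Gamma_Barnes} to the single $q$-number $[-\Lambda-\sigma]=(1-q^{-\Lambda-\sigma})/(1-q)=\frac{q^{-\Lambda-\sigma}}{q-1}(1+o(1))$ (using $|q|<1$). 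Combining this with $q^{\theta_0}-q^{-\theta_0}=-q^{-\theta_0}(1+o(1))$ and $t=q^{\Lambda}t_1$ gives $B_\Lambda\to-q^{-1/4}t_1^{1/2}$, while the same bookkeeping of $t$-powers gives $F/(\underline{C_3}C_4)=q^{2\theta_0}F/(C_3\underline{C_4})$, hence $A_\Lambda=q^{2\theta_0}B_\Lambda\to 0$ because $\theta_0=(\Lambda-\theta_\star)/2\to\infty$.

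Putting the two halves together, the limit of \eqref{bilin-9V} reads $\underline{\tau_1^{\mathrm{III_1}}}\tau_2^{\mathrm{III_1}}-\tau_1^{\mathrm{III_1}}\underline{\tau_2^{\mathrm{III_1}}}=0-(-q^{-1/4}t_1^{1/2})\tau_3^{\mathrm{III_1}}\underline{\tau_4^{\mathrm{III_1}}}=q^{-1/4}t_1^{1/2}\tau_3^{\mathrm{III_1}}\underline{\tau_4^{\mathrm{III_1}}}$, which is \eqref{bilin-9III_1} after renaming $t_1$ to $t$. The main obstacle is the purely mechanical asymptotics of $A_\Lambda$ and $B_\Lambda$, i.e.\ keeping careful track of the $q$-, $(q-1)$- and $t$-powers carried by $C_1,\dots,C_6$ and of the telescoping of the $q$-Barnes products; and, as elsewhere in the paper, one takes for granted that the termwise convergence $C_i\tau_i^{\mathrm V}\to\tau_i^{\mathrm{III_1}}$ is strong enough (locally uniform in the time variable) that products of tau functions pass to the limit.
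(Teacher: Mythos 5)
Your proposal is correct and takes essentially the same route as the paper: the paper's proof consists precisely of recording the exact identities $\underline{C_1}C_2=C_1\underline{C_2}=\big(q^{-\Lambda}-q^{\sigma}\big)(q-1)^{-1/2}t_1^{-1/2}q^{\theta_0+1/4}\underline{C_3}C_4=\big(q^{-\Lambda}-q^{\sigma}\big)(q-1)^{-1/2}t_1^{-1/2}q^{-\theta_0+1/4}C_3\underline{C_4}$ and then letting $\Lambda\to\infty$ in \eqref{bilin-9V}. Your normalization by $F=\underline{C_1}C_2$, the limits $B_\Lambda\to-q^{-1/4}t_1^{1/2}$ and $A_\Lambda=q^{2\theta_0}B_\Lambda\to0$ (which kills the $\underline{\tau_3}\tau_4$ term), and the caveat about termwise convergence are exactly the content that the paper's terse ``hence'' leaves implicit.
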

\begin{proof}By definition and \eqref{eq_limit-III1} we have
\begin{gather*}
\underline{C_1}C_2= C_1\underline{C_2} =\big(q^{-\Lambda}-q^{\sigma}\big)(q-1)^{-1/2}t_1^{-1/2}q^{\theta_0+1/4}\underline{C_3}C_4\\
\hphantom{\underline{C_1}C_2= C_1\underline{C_2}}{} =\big(q^{-\Lambda}-q^{\sigma}\big)(q-1)^{-1/2}t_1^{-1/2}q^{-\theta_0+1/4}C_3\underline{C_4}.
\end{gather*}
Hence from the four-term bilinear equation \eqref{bilin-9V} degenerates to the three-term bilinear equa\-tion~\eqref{bilin-9III_1} by \eqref{eq_limit-III1} as $\Lambda\to \infty$.
\end{proof}

\begin{Theorem}The functions
\begin{gather}\label{eq_qPIII_1yz}
y=q^{-\theta_*-1} t^{1/2}\frac{\tau_3\tau_4}{\tau_1\tau_2},\qquad
z=q^{-\theta_*/2-3/4}t^{1/2}
\frac{\tau_3\underline{\tau_4}}
{\underline{\tau_1}\tau_2}
\end{gather}
solves the $q$-Painlev\'e $\mathrm{III}_1$ equation
\begin{gather}
\frac{y \overline y}{a_3 a_4}=-\frac{\overline z (\overline z -b_2 t)}{\overline z-b_3},\qquad
\frac{z \overline z}{b_3}=-\frac{y(y-a_2 t)}{a_4(y-a_3)}\label{eq_qPIII_1}
\end{gather}
with the parameters
\begin{gather*}
 a_2=q^{-\theta_\star-\theta_*-1},\qquad\! a_3=q^{-1}, \qquad\!
a_4=q^{-3\theta_*/2-1/2}, \qquad \!
 b_2=q^{-\theta_*/2},\qquad\!
b_3=q^{-\theta_*/2-1/2}.
\end{gather*}
Furthermore, the tau functions $\tau_i$ $(i=1,\dots, 4)$ satisfy the following bilinear equations.
\begin{gather}
\tau_1\tau_2- q^{-\theta_*} t^{1/2} \tau_3\tau_4-\overline{\tau_1} \underline{\tau_2}=0,\label{bilin-1III_1}\\
\tau_1\tau_2-q^{\theta_\star}t^{-1/2} \tau_3\tau_4+q^{\theta_\star}t^{-1/2} \overline{\tau_3}\underline{ \tau_4}=0,\label{bilin-4III_1}\\
\tau_1\underline{\tau_2}+q^{-1/4} t^{1/2}\tau_3\underline{\tau_4}-\underline{\tau_1}\tau_2=0,\label{bilin-5III_1}\\
\tau_1\underline{\tau_2}+q^{1/4}t^{-1/2}\tau_3\underline{\tau_4}-q^{1/4}t^{-1/2}\underline{\tau_3}\tau_4=0.\label{bilin-7III_1}
\end{gather}
\end{Theorem}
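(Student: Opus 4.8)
The statement has two halves — that $y,z$ in \eqref{eq_qPIII_1yz} solve \eqref{eq_qPIII_1}, and that $\tau_1,\dots,\tau_4$ satisfy the four bilinear equations \eqref{bilin-1III_1}--\eqref{bilin-7III_1} — and both would be obtained by a further degeneration of the $q$-$\mathrm{P_V}$ results already proved in this section.

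\emph{The nonlinear equation.} Start from the $q$-$\mathrm{P_V}$ solution of Theorem~\ref{thm_qPV} and impose \eqref{eq_limit-III1}. By the Proposition that computes the limits of the $q$-$\mathrm{P_V}$ tau functions, $C_i\tau_i^{\mathrm V}\to\tau_i^{\mathrm{III_1}}$ for $i=1,\dots,4$ — the divergent powers of $t$ inside the $C_i$ being exactly cancelled by the factor $t^{-\theta_t^2-\theta_0^2}$ in the series of $\tau^{\mathrm V}$ — and the $C_i$ conspire, just as in that Proposition, so that the $q$-$\mathrm{P_V}$ expressions for $y$ and $z$ pass in the limit to those of \eqref{eq_qPIII_1yz}; the limiting form of $z$ is brought into the stated shape with the help of the already-proved relation \eqref{bilin-9III_1}. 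It then remains to see that \eqref{eq_qPV} degenerates to \eqref{eq_qPIII_1}: because $|q|<1$ forces $q^{\Lambda}\to0$, under \eqref{eq_limit-III1} one has $a_1t,b_2t\to0$ while $a_2t\to q^{-\theta_\star-\theta_*-1}t_1$ and $b_1t\to q^{-\theta_*/2}t_1$ stay finite, so in the first (resp.\ second) equation of \eqref{eq_qPV} one numerator factor collapses to $\overline z$ (resp.\ $y$) and the system becomes \eqref{eq_qPIII_1} with precisely the parameters listed.

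\emph{The bilinear equations.} Equation \eqref{bilin-5III_1} is just \eqref{bilin-9III_1} rewritten, hence already available. For \eqref{bilin-1III_1}, \eqref{bilin-4III_1} and \eqref{bilin-7III_1} I would first read off candidate identities by degenerating the conjectural $q$-$\mathrm{P_V}$ bilinear equations \eqref{bilin-1V}, \eqref{bilin-4V} and \eqref{bilin-8V} under \eqref{eq_limit-III1} — inserting the $C_i$-rescalings together with the remaining limits $C_5\tau_5^{\mathrm V}\to\overline{\tau_3^{\mathrm{III_1}}}$, $C_6\tau_6^{\mathrm V}\to\overline{\tau_4^{\mathrm{III_1}}}$ of that Proposition — and then prove the candidates directly, without appealing to the $q$-$\mathrm{P_V}$ conjecture. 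Dividing a candidate by an appropriate product of tau functions turns it into a relation among the ratios $\tau_3\tau_4/(\tau_1\tau_2)$, $\tau_3\underline{\tau_4}/(\underline{\tau_1}\tau_2)$, $\overline{\tau_1}\underline{\tau_2}/(\tau_1\tau_2)$, $\overline{\tau_3}\underline{\tau_4}/(\tau_3\tau_4)$ and $\underline{\tau_3}\tau_4/(\tau_1\underline{\tau_2})$; by \eqref{eq_qPIII_1yz}, \eqref{bilin-9III_1} and the $t\mapsto qt$ shift of \eqref{bilin-9III_1} these can all be expressed through $y$, $z$, $\overline z$ and a single second-difference quotient $\overline{\tau_1}\,\underline{\tau_1}/\tau_1^2$, after which the identity follows from the nonlinear equation \eqref{eq_qPIII_1} proved above. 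The value of that quotient — equivalently, equation \eqref{bilin-1III_1} itself — is not forced by \eqref{bilin-9III_1} and \eqref{eq_qPIII_1}, and has to be read off from the explicit $q$-Nekrasov series: it is fixed by matching the $t\to0$ leading ($n=0$, $\boldsymbol{\lambda}=\varnothing$) terms, whose coefficients are $q$-Barnes functions governed by \eqref{eq_Gamma_Barnes} (alternatively, \eqref{bilin-1III_1} may be checked as a formal power series identity in $t$ through known identities for Nekrasov factors). Once \eqref{bilin-1III_1} holds, \eqref{bilin-4III_1} and \eqref{bilin-7III_1} follow from it together with \eqref{bilin-9III_1} and \eqref{eq_qPIII_1}.

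\emph{Where the difficulty lies.} The limiting computation for the nonlinear equation is routine. The real content is in the bilinear equations: unlike the nonlinear system they remember the overall normalization of the tau functions, so one cannot avoid returning to the defining series, and that bookkeeping is made delicate by the $\sqrt q$-shifts in the arguments of $\tau_1^{\mathrm{III_1}},\dots,\tau_4^{\mathrm{III_1}}$ and by the $t$-dependence of the normalizing constants $C_1,\dots,C_6$.
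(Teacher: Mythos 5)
Your treatment of the nonlinear half and of \eqref{bilin-5III_1} matches the paper's: under \eqref{eq_limit-III1} the $q$-$\mathrm{P_V}$ solution of Theorem~\ref{thm_qPV} passes to \eqref{eq_qPIII_1yz} (with \eqref{bilin-9III_1} used to bring $z$ into the stated form), the surviving factors of \eqref{eq_qPV} are exactly the ones you identify, and \eqref{bilin-5III_1} is \eqref{bilin-9III_1}. The paper also obtains \eqref{bilin-7III_1} by substituting \eqref{eq_qPIII_1yz} into the first equation of \eqref{eq_qPIII_1} and using \eqref{bilin-5III_1} --- note this needs neither \eqref{bilin-1III_1} nor the $q$-$\mathrm{P_V}$ conjecture, whereas you make \eqref{bilin-7III_1} contingent on \eqref{bilin-1III_1}.

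The genuine gap is your derivation of \eqref{bilin-1III_1} (and with it \eqref{bilin-4III_1}). You are right that this equation is an independent input, not forced by \eqref{bilin-9III_1} and \eqref{eq_qPIII_1}; but the way you propose to supply it is not a proof. Matching the $n=0$, $\boldsymbol{\lambda}=\varnothing$ terms as $t\to 0$ fixes at most one coefficient of an identity that must hold to all orders, and since each tau function is a sum over $n\in\mathbb{Z}$ of $t^{(\sigma+n)^2}$ times a series, even a ``$q$-periodic ratio plus leading-term'' argument would need a separate justification you do not give. Your fallback --- verifying \eqref{bilin-1III_1} as a formal series identity ``through known identities for Nekrasov factors'' --- is precisely the step that remains conjectural in this paper for the analogous bilinear equations of $q$-$\mathrm{P_{VI}}$ and $q$-$\mathrm{P_V}$; if such a direct verification were available it would have been used there. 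The missing idea is the discrete symmetry \eqref{eq_qPIII_1_transformation}, $\big(\tilde\theta_*,\tilde\theta_\star,\tilde\sigma,\tilde s,\tilde t\big)=\big({-}\theta_\star,-\theta_*,\sigma-\tfrac{1}{2},Cs,q^{-\theta_*-\theta_\star+1/2}t\big)$, which exists only at the $\mathrm{III_1}$ level because it exchanges the two mass parameters $\theta_*$ and $\theta_\star$. Using $N_{\varnothing,\lambda}(u)N_{\lambda,\varnothing}(w)=(uw)^{|\lambda|}N_{\varnothing,\lambda}\big(w^{-1}\big)N_{\lambda,\varnothing}\big(u^{-1}\big)$ on the series and $q$-Barnes identities on the prefactors, one finds $\tilde\tau_1\propto\tau_4$, $\tilde\tau_2\propto\overline{\tau_3}$, $\tilde\tau_3\propto\tau_2$, $\tilde\tau_4\propto\overline{\tau_1}$, with the constants combining via \eqref{eq_K1}; applying this substitution to the already-proved \eqref{bilin-5III_1} and \eqref{bilin-7III_1} then yields \eqref{bilin-1III_1} and \eqref{bilin-4III_1}. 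Without this (or an equivalent new input) your argument establishes the nonlinear equation and only two of the four bilinear equations.
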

\begin{proof}By definition and \eqref{eq_limit-III1} we have
\begin{gather*}
C_1C_2=\big(q^{-\Lambda}-q^{\sigma}\big)(q-1)^{-1/2}t_1^{-1/2}C_3C_4.
\end{gather*}
Hence, by \eqref{eq_limit-III1} and \eqref{bilin-9III_1} the solution ($y,z$) of the $q$-Painlev\'e V equation degenerates to
\begin{gather*}
y \to y_1=q^{-\theta_*-1} t_1^{1/2}\frac{\tau_3\tau_4}{\tau_1\tau_2},\qquad z \to z_1=q^{-\theta_*/2-3/4}t_1^{1/2} \frac{\tau_3\underline{\tau_4}}
{\underline{\tau_1}\tau_2},\qquad \Lambda\to \infty .
\end{gather*}
Also, the $q$-Painlev\'e V equation \eqref{eq_qPV} degenerates to the $q$-Painlev\'e $\mathrm{III}_1$ equation~\eqref{eq_qPIII_1} for $y=y_1 $ and $z=z_1$ as $\Lambda\to \infty$.

Next we prove the bilinear equations \eqref{bilin-1III_1}--\eqref{bilin-7III_1}. The bilinear equation \eqref{bilin-5III_1} is \eqref{bilin-9III_1}. The identity~\eqref{bilin-7III_1} is obtained by substituting the expression~\eqref{eq_qPIII_1yz} of $(y,z)$ into the $q$-Painlev\'e $\mathrm{III_1}$ equation{\samepage
 \begin{gather*}
 \frac{y \overline y}{a_3 a_4}=-\frac{\overline z (\overline z -b_2 t)}{\overline z-b_3},
 \end{gather*}
and using the bilinear equation \eqref{bilin-5III_1}.}

In order to prove \eqref{bilin-1III_1} and \eqref{bilin-4III_1}, we use the following transformation
 \begin{gather}
\big( \tilde{\theta}_*, \tilde{\theta}_\star, \tilde{\sigma}, \tilde{s}, \tilde{t}\big)= \big( -\theta_\star, -\theta_*, \sigma-\tfrac{1}{2}, Cs,
q^{-\theta_*-\theta_\star+1/2}t\big),\label{eq_qPIII_1_transformation}
\end{gather}
where
\begin{gather*}
C= q^{(\sigma-1)(2\theta_*+2\theta_\star+1)}
\prod_{\varepsilon,\varepsilon'=\pm}\Gamma_q\big( \tfrac{1}{2}+\varepsilon\theta_*+\varepsilon'(\sigma-1)\big)^{-\varepsilon\varepsilon'}
\Gamma_q\big( \tfrac{1}{2}+\varepsilon\big(\theta_\star+\tfrac{1}{2}\big)+\varepsilon'(\sigma-1)\big)^{-\varepsilon\varepsilon'}.
\end{gather*}
From the definition of the Nekrasov factor, for a partition $\lambda$ we have
\begin{gather*}
N_{\varnothing,\lambda}(u)N_{\lambda,\varnothing}(w)= (uw)^{|\lambda|}N_{\varnothing,\lambda}\big(w^{-1}\big)N_{\lambda,\varnothing}\big(u^{-1}\big).
\end{gather*}
By the identity above, the series part $Z$ of the tau functions $\tau_1,\dots,\tau_4$ transform to
\begin{gather*}
Z_{\mathrm{{III}_1}}\big[\tilde\theta_*-\tfrac{1}{2},\tilde\theta_\star\,|\,\tilde\sigma,\tilde t/\sqrt{q}\big]
= Z_{\mathrm{{III}_1}}\big[\theta_*,\theta_\star+\tfrac{1}{2}\,|\,\sigma-\tfrac{1}{2},\sqrt{q} t\big],\\
Z_{\mathrm{{III}_1}}\big[\tilde\theta_*+\tfrac{1}{2},\tilde\theta_\star\,|\,\tilde\sigma,\sqrt{q}\tilde t\big]
= Z_{\mathrm{{III}_1}}\big[\theta_*,\theta_\star-\tfrac{1}{2}\,|\,\sigma-\tfrac{1}{2},\sqrt{q} t\big],\\
Z_{\mathrm{{III}_1}}\big[\tilde\theta_*,\tilde\theta_\star-\tfrac{1}{2}\,|\,\tilde\sigma+\tfrac{1}{2},\tilde t/\sqrt{q}\big]
= Z_{\mathrm{{III}_1}}\big[\theta_*+\tfrac{1}{2},\theta_\star\,|\,\sigma,\sqrt{q} t\big],\\
Z_{\mathrm{{III}_1}}\big[\tilde\theta_*,\tilde\theta_\star+\tfrac{1}{2}\,|\,\tilde\sigma-\tfrac{1}{2},\tilde t\big]
= Z_{\mathrm{{III}_1}}\big[\theta_*-\tfrac{1}{2},\theta_\star\,|\,\sigma-1,\sqrt{q} t\big],
\end{gather*}
respectively. Using the identity
\begin{gather*}
\frac{G_q(1+x+n)G_q(1-x)}{G_q(1-x-n)G_q(1+x)} =(-1)^{n(n+1)/2}q^{n(n+1)x/2+(n-1)n(n+1)/6}\Gamma_q(x)^n\Gamma_q(1-x)^n
\end{gather*}
for $n\in\mathbb{Z}$, we can compute the coefficients $C_{\mathrm{III_1}}$
and obtain
\begin{alignat*}{3}
&\tilde \tau_1=K\big[\theta_*,\theta_\star+\tfrac{1}{2},\sigma-\tfrac{1}{2}\big]\tau_4,\qquad&&
\tilde \tau_2=s K\big[\theta_*,\theta_\star-\tfrac{1}{2},\sigma-\tfrac{1}{2}\big] \overline{\tau_3},&\\
&\tilde \tau_3=K\big[\theta_*+\tfrac{1}{2},\theta_\star,\sigma\big]\tau_2,\qquad &&
\tilde \tau_4=s K\big[\theta_*-\tfrac{1}{2},\theta_\star,\sigma-1\big] \overline{\tau_1},&
\end{alignat*}
where we denote by $\tilde{\tau}_i$ the tau functions with parameters $ ( \tilde{\theta}_*, \tilde{\theta}_\star, \tilde{\sigma}, \tilde{s}, \tilde{t} )$ and by $\tau_i$ the tau functions with parameters $(\theta_*,\theta_\star,\sigma,s,t)$, and
\begin{gather*}
K [\theta_*,\theta_\star,\sigma ]= q^{-(\theta_*+\theta_\star)\sigma^2}
\prod_{\varepsilon,\varepsilon'=\pm}G_q(1+\varepsilon \theta_*+\varepsilon'\sigma)^\varepsilon G_q(1+\varepsilon\theta_\star+\varepsilon'\sigma)^\varepsilon .
\end{gather*}
By definition we have
\begin{gather}
 \frac{K\big[\theta_*,\theta_\star+\tfrac{1}{2},\sigma-\tfrac{1}{2}\big]
K\big[\theta_*,\theta_\star-\tfrac{1}{2},\sigma-\tfrac{1}{2}\big]}{K\big[\theta_*+\tfrac{1}{2},\theta_\star,\sigma\big]K\big[\theta_*-\tfrac{1}{2},\theta_\star,\sigma-1\big]}=
-q^{(\theta_\star-\theta_*)/2} . \label{eq_K1}
\end{gather}
Applying the transformation \eqref{eq_qPIII_1_transformation} to the bilinear equations \eqref{bilin-5III_1} and \eqref{bilin-7III_1} and using the rela\-tion~\eqref{eq_K1}, we obtain the identities \eqref{bilin-1III_1} and \eqref{bilin-4III_1}.
\end{proof}

We note that the bilinear equations \eqref{bilin-1V}, \eqref{bilin-4V}, \eqref{bilin-5V}, and \eqref{bilin-7V} for the tau functions of~$q$-$\mathrm{P_V}$ degenerate to \eqref{bilin-1III_1}, \eqref{bilin-4III_1}, \eqref{bilin-5III_1}, and \eqref{bilin-7III_1}, respectively.

\section[From $q$-$\mathrm{P_{{III}_1}}$ to $q$-$\mathrm{P_{{III}_2}}$]{From $\boldsymbol{q}$-$\boldsymbol{\mathrm{P_{{III}_1}}}$ to $\boldsymbol{q}$-$\boldsymbol{\mathrm{P_{{III}_2}}}$}\label{section5}

In this section, we take a limit of the tau functions of $q$-$\mathrm{P_{III_1}}$ to $q$-$\mathrm{P_{III_2}}$. Define the tau function by
\begin{gather*}
\tau^{\mathrm{{III}_2}}(\theta_*\,|\, s,\sigma,t)= \sum_{n\in\mathbb{Z}}s^n t^{(\sigma+n)^2} C_{\mathrm{{III}_2}} [\theta_*\,|\,\sigma+n ]
 Z_{\mathrm{{III}_2}} [\theta_*\,|\,\sigma+n,t ] ,
\end{gather*}
with
\begin{gather*}
C_{\mathrm{{III}_2}} [\theta_*\,|\,\sigma ] = (q-1)^{-3\sigma^2}
\prod_{\varepsilon=\pm} \frac{G_q(1-\theta_*+\varepsilon\sigma)}{G_q(1+2\varepsilon\sigma)},\\
Z_{\mathrm{{III_2}}} [\theta_*\,|\,\sigma,t ]
= \sum_{(\lambda_+,\lambda_-)\in\mathbb{Y}^2}
t^{|\lambda_+|+|\lambda_-|}
\frac{\prod\limits_{\varepsilon=\pm}N_{\varnothing,\lambda_{\varepsilon}}
\big(q^{-\theta_*-\varepsilon \sigma}\big)
f_{\lambda_{\varepsilon}}(q^{\varepsilon\sigma})^{-1}}
{\prod\limits_{\varepsilon,\varepsilon'=\pm}N_{\lambda_\varepsilon,\lambda_{\varepsilon'}}\big(q^{(\varepsilon-\varepsilon')\sigma}\big)} .
\end{gather*}

In the same way as in Section~\ref{section3}, it is possible to remove $f_{\lambda_\varepsilon}(q^{\varepsilon\sigma})^{-1}$ from $Z_\mathrm{{III}_2} [\theta_*\,|\,\sigma,t ]$ by change of variables. Because if we set
\begin{gather*}
Z_\mathrm{{III_2}}^{CS=0} [\theta_*\,|\,\sigma,t ] =\sum_{(\lambda_+,\lambda_-)\in\mathbb{Y}^2}
t^{|\lambda_+|+|\lambda_-|} \frac{\prod\limits_{\varepsilon=\pm}N_{\varnothing,\lambda_{\varepsilon}}\big(q^{-\theta_*-\varepsilon \sigma}\big)}
{\prod\limits_{\varepsilon,\varepsilon'=\pm}N_{\lambda_\varepsilon,\lambda_{\varepsilon'}}\big(q^{(\varepsilon-\varepsilon')\sigma}\big)},
\end{gather*}
then we have
\begin{gather*}
Z_\mathrm{III_2} [\theta_*\,|\,\sigma,t ] =Z_\mathrm{III_2}^{CS=0} \big[{-}\theta_*\,|\,\sigma,q^{-\theta_*}t \big].
\end{gather*}

Let us define the tau functions for $q$-$\mathrm{P_{III_2}}$ by
\begin{gather*}
\tau^{\mathrm{{III}_2}}_1= \tau^{\mathrm{{III}_2}}\big(\theta_*-\tfrac{1}{2}\,|\, s,\sigma,t/\sqrt{q}\big),\qquad
\tau^{\mathrm{{III}_2}}_2= \tau^{\mathrm{{III}_2}}\big(\theta_*+\tfrac{1}{2}\,|\, s,\sigma+1,\sqrt{q}t\big),\\
\tau^{\mathrm{{III}_2}}_3=\tau^{\mathrm{{III}_2}}\big(\theta_*\,|\, s,\sigma+\tfrac{1}{2},t\big).
\end{gather*}
Put
\begin{gather*}
C_1=(q-1)^{-\sigma^2}q^{-\Lambda\sigma^2}\prod_{\varepsilon=\pm}G_q (1-\Lambda+\varepsilon\sigma )^{-1},\\
C_2=C_1,\\
C_3=(q-1)^{-(\sigma+1/2)^2}q^{-(\Lambda-1/2)(\sigma+1/2)^2}\prod_{\varepsilon=\pm}G_q\big( \tfrac{3}{2} -\Lambda+\varepsilon\big(\sigma+\tfrac{1}{2}\big)\big)^{-1},\\
C_4=(q-1)^{-(\sigma-1/2)^2}q^{-(\Lambda+1/2)(\sigma-1/2)^2}\prod_{\varepsilon=\pm}G_q\big(\tfrac{1}{2}-\Lambda+\varepsilon\big(\sigma-\tfrac{1}{2}\big)\big)^{-1}.
\end{gather*}

\begin{Proposition}Set
\begin{gather*}
 \theta_\star=\Lambda,\qquad t = q^{\Lambda}t_1,\qquad s = \tilde s (q-1)^{-2\sigma}q^{-\sigma(2\Lambda+1)}
\prod_{\varepsilon=\pm}\Gamma_q (-\Lambda+\varepsilon\sigma )^{-\varepsilon}.
\end{gather*}
Then we have
\begin{gather*}
C_i\tau^{\mathrm{III_1}}_i(\theta_*,\theta_\star\,|\, s,\sigma,t)\to \tau^{\mathrm{III_2}}_i(\theta_*\,|\, \tilde s,\sigma,t_1),\qquad i=1,3,\\
C_2\tau^{\mathrm{III_1}}_2(\theta_*,\theta_\star\,|\, s,\sigma,t)\to \tilde s\tau^{\mathrm{III_2}}_2(\theta_*\,|\, \tilde s,\sigma,t_1),\\
C_4\tau^{\mathrm{III_1}}_4(\theta_*,\theta_\star\,|\, s,\sigma,t)\to \tilde s\tau^{\mathrm{III_2}}_3(\theta_*\,|\, \tilde s,\sigma,t_1)
\end{gather*}
as $\Lambda \to \infty$.
\end{Proposition}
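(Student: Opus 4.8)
The plan is to repeat, essentially verbatim in structure, the two-step argument from the proof of Proposition~\ref{prop_qPV_tau}: first the limit of the series part $Z_{\mathrm{III}_1}\to Z_{\mathrm{III}_2}$, then the limit of the accompanying coefficients. The specialization $\theta_\star=\Lambda$, $t=q^{\Lambda}t_1$ together with the exponential rescaling of $s$ is arranged precisely so that the $\Lambda$-divergences on the two sides cancel.

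For the series part I would observe that once $\theta_\star=\Lambda$ the only $\Lambda$-dependent Nekrasov factors in $Z_{\mathrm{III}_1}[\theta_*,\theta_\star\,|\,\sigma,t]$ are $N_{\lambda_\varepsilon,\varnothing}(q^{\varepsilon\sigma-\Lambda})=N_{\lambda_\varepsilon,\varnothing}\big(q^{-\Lambda}q^{\varepsilon\sigma}\big)$. Using the limit already recorded in the proof of the previous proposition in Section~\ref{section4},
\[
N_{\lambda,\varnothing}\big(q^{-\Lambda}u\big)\,q^{\Lambda|\lambda|}=\prod_{\square\in\lambda}\big(q^{\Lambda}-q^{-\ell_\lambda(\square)-a_\varnothing(\square)-1}u\big)\longrightarrow f_{\lambda}(u)^{-1},\qquad\Lambda\to\infty,
\]
and supplying the required factors $q^{\Lambda|\lambda_\varepsilon|}$ out of the expansion $t^{|\lambda_+|+|\lambda_-|}=q^{\Lambda(|\lambda_+|+|\lambda_-|)}t_1^{|\lambda_+|+|\lambda_-|}$, one gets termwise convergence $Z_{\mathrm{III}_1}[\theta_*,\theta_\star\,|\,\sigma,t]\to Z_{\mathrm{III}_2}[\theta_*\,|\,\sigma,t_1]$; the half-integer shifts of $\theta_*$ and the $\sqrt q$-rescalings of $t$ built into $\tau_i^{\mathrm{III}_1}$ are harmless here.

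For the coefficients, for each $n\in\mathbb{Z}$ I would expand the $(\sigma+n)$-shifted $q$-Barnes functions occurring in $C_{\mathrm{III}_1}$ — in particular the divergent factor $\prod_{\varepsilon=\pm}G_q(1+\varepsilon(\sigma+n)-\Lambda)$ — by means of the identity~\eqref{eq_Gqn}, exactly as in Proposition~\ref{prop_qPV_tau}. Then the prefactor $\prod_{\varepsilon=\pm}G_q(1-\Lambda+\varepsilon\sigma)^{-1}$ in the $C_i$ removes the $G_q$'s at argument $\sigma$, the factor $\prod_{\varepsilon=\pm}\Gamma_q(-\Lambda+\varepsilon\sigma)^{\varepsilon n}$ produced by~\eqref{eq_Gqn} is cancelled by the $\prod_{\varepsilon=\pm}\Gamma_q(-\Lambda+\varepsilon\sigma)^{-\varepsilon n}$ contained in $s^n$, and the surviving $q$-numbers satisfy $[-\Lambda+c]\sim q^{-\Lambda+c}/(q-1)$ as $\Lambda\to\infty$. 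Counting the $n^2$ such $q$-number factors, checking that their $\sigma$-dependent exponents add up to $n\sigma$, and combining with $q^{-\Lambda\sigma^2}$ and $q^{-\sigma(2\Lambda+1)n}$ coming from $C_i$ and $s^n$ and with $q^{\Lambda(\sigma+n)^2}$ coming from $t^{(\sigma+n)^2}$, I expect every power of $q^{\Lambda}$ to cancel, leaving exactly $\tilde s^n$, the correct power of $\sqrt q$ from the rescaling of $t_1$, and the extra factor $(q-1)^{-(\sigma+n)^2}$ that converts $C_{\mathrm{III}_1}$ (with $\sigma^2$-prefactor $(q-1)^{-2(\sigma+n)^2}$ at argument $\sigma+n$) into $C_{\mathrm{III}_2}$ (with prefactor $(q-1)^{-3(\sigma+n)^2}$). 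Summing over $n$ then gives the first displayed line of the claim, i.e.\ $C_i\tau_i^{\mathrm{III}_1}\to\tau_i^{\mathrm{III}_2}$ for $i=1,3$.

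Finally, the extra $\tilde s$ prefactors for $i=2,4$ I would account for by a relabelling of the Laurent sum over $n$: in those two cases the tau function on the right-hand side is normalized with summation parameter $\sigma+1$ (respectively with $\sigma-\tfrac12$ replaced by $\sigma+\tfrac12$), i.e.\ with an integer shift relative to the object the limit naturally produces, so that substituting $n\mapsto n-1$ factors out one power of $\tilde s$. The step I expect to be the main obstacle is purely the bookkeeping in the coefficient computation — tracking every $q^{\Lambda}$-, $(q-1)$-, $G_q$- and $\Gamma_q$-factor simultaneously and verifying that the reindexing for $i=2,4$ is compatible with the specific normalizations $C_2,C_4$ in the statement — but there is no new idea beyond what is already used in Proposition~\ref{prop_qPV_tau}.
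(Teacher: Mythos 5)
Your proposal is correct and follows exactly the route the paper intends: the paper omits the proof of this proposition as routine, and the two analogous proofs it does give (Proposition~3.1 and the Section~4 proposition) consist precisely of your two steps --- the termwise limit $N_{\lambda,\varnothing}\big(q^{-\Lambda}u\big)q^{\Lambda|\lambda|}\to f_\lambda(u)^{-1}$ for the series part and the expansion of the shifted $q$-Barnes factors via~\eqref{eq_Gqn} for the coefficients. Your identification of the extra $(q-1)^{-(\sigma+n)^2}$ converting $C_{\mathrm{III}_1}$ into $C_{\mathrm{III}_2}$ and of the reindexing $n\mapsto n-1$ that produces the $\tilde s$ prefactors for $i=2,4$ is also right.
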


In what follows, we abbreviate $\tau_i^{\mathrm{III_2}}(\theta_*\,|\, s,\sigma,t)$ to $\tau_i$. Since we have the relation
\begin{gather*}
 C_1C_2=(q-1)^{-1/2}\big(q^{-\Lambda/2}-q^{\Lambda/2-\sigma}\big)C_3C_4,
\end{gather*}
we obtain the following theorem by the degeneration.
\begin{Theorem}
The functions
\begin{gather*}
y=q^{-\theta_*-1}(q-1)^{-1/2} t^{1/2}\frac{\tau_3^2}{\tau_1\tau_2},\qquad
z=q^{-\theta_*/2-3/4}(q-1)^{-1/2}t^{1/2}\frac{\tau_3\underline{\tau_3}}{\underline{\tau_1}\tau_2}
\end{gather*}
solves the $q$-Painlev\'e $\mathrm{III_2}$ equation
\begin{gather*}
\frac{y \overline y}{a_3 a_4}=-\frac{\overline z^2}{\overline z-b_3} ,\qquad
\frac{z \overline z}{b_3}=-\frac{y(y-a_2 t)}{a_4(y-a_3)}
\end{gather*}
with the parameters
\begin{gather*}
 a_2=q^{-\theta_*-1},\qquad a_3=q^{-1}, \qquad a_4=q^{-3\theta_*/2-1/2},
\qquad b_2=q^{-\theta_*/2},\qquad b_3=q^{-\theta_*/2-1/2}.
\end{gather*}
 Furthermore, the tau functions $\tau_i$ ($i=1,2,3$) satisfy the following bilinear equations.
\begin{gather}
\tau_1\tau_2- q^{-\theta_*}(q-1)^{-1/2} t^{1/2} \tau_3^2-\overline{\tau_1} \underline{\tau_2}=0,\label{bilin-1III_2}\\
\tau_1\tau_2-(q-1)^{-1/2} t^{-1/2} \tau_3^2+(q-1)^{-1/2} t^{-1/2} \overline{\tau_3}\underline{ \tau_3}=0,\label{bilin-4III_2}\\
\tau_1\underline{\tau_2}+q^{-1/4}(q-1)^{-1/2}t^{1/2} \tau_3\underline{\tau_3}-\underline{\tau_1}\tau_2=0.\label{bilin-5III_2}
\end{gather}
\end{Theorem}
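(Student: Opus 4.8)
The plan is to obtain every assertion by degenerating the corresponding statement of the $q$-$\mathrm{P_{III_1}}$ theorem under the scaling of the preceding Proposition, namely $\theta_\star=\Lambda$, $t=q^{\Lambda}t_1$, $s=\tilde s(q-1)^{-2\sigma}q^{-\sigma(2\Lambda+1)}\prod_{\varepsilon=\pm}\Gamma_q(-\Lambda+\varepsilon\sigma)^{-\varepsilon}$, $\Lambda\to\infty$, exactly as Section~\ref{section4} derived the $q$-$\mathrm{P_{III_1}}$ theorem from the $q$-$\mathrm{P_V}$ one. The ingredients I would use are: (a) the Proposition's limits $C_i\tau^{\mathrm{III_1}}_i\to\tau_i$ for $i=1,3$, together with $C_2\tau^{\mathrm{III_1}}_2\to\tilde s\,\tau_2$ and $C_4\tau^{\mathrm{III_1}}_4\to\tilde s\,\tau_3$, read as identities of series in $t_1$; and (b) the displayed relation $C_1C_2=(q-1)^{-1/2}\big(q^{-\Lambda/2}-q^{\Lambda/2-\sigma}\big)C_3C_4$. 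Since the $C_i$ of this section are independent of $t$, each of these limits commutes with the shifts $\overline{\,\cdot\,}$ and $\underline{\,\cdot\,}$ (which act as $t_1\mapsto qt_1$ and $t_1\mapsto t_1/q$), so e.g.\ $C_1\underline{\tau^{\mathrm{III_1}}_1}\to\underline{\tau_1}$ and $C_2\overline{\tau^{\mathrm{III_1}}_2}\to\tilde s\,\overline{\tau_2}$; this is precisely what makes the present section lighter than Section~\ref{section4}.

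For the solution I would plug (a) into \eqref{eq_qPIII_1yz}. In the ratio $\tau^{\mathrm{III_1}}_3\tau^{\mathrm{III_1}}_4/(\tau^{\mathrm{III_1}}_1\tau^{\mathrm{III_1}}_2)$ the factors $\tilde s$ cancel and one is left with $(C_1C_2/(C_3C_4))\,\tau_3^2/(\tau_1\tau_2)$; likewise $\tau^{\mathrm{III_1}}_3\underline{\tau^{\mathrm{III_1}}_4}/(\underline{\tau^{\mathrm{III_1}}_1}\tau^{\mathrm{III_1}}_2)\sim(C_1C_2/(C_3C_4))\,\tau_3\underline{\tau_3}/(\underline{\tau_1}\tau_2)$. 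Multiplying by $t^{1/2}=q^{\Lambda/2}t_1^{1/2}$ and invoking (b) as $q^{\Lambda/2}C_1C_2/(C_3C_4)=(q-1)^{-1/2}(1-q^{\Lambda-\sigma})\to(q-1)^{-1/2}$ turns the pair \eqref{eq_qPIII_1yz} into exactly the $(y,z)$ of the statement. On the equation side I would substitute $\theta_\star=\Lambda$, $t=q^{\Lambda}t_1$ into \eqref{eq_qPIII_1} and its parameter list: then $a_3a_4=q^{-3\theta_*/2-3/2}$, $b_3=q^{-\theta_*/2-1/2}$ and $a_2t=q^{-\theta_*-1}t_1$ are $\Lambda$-independent, while $b_2t=q^{-\theta_*/2+\Lambda}t_1\to0$; hence \eqref{eq_qPIII_1} degenerates termwise (using also $\overline y\to\overline{y_1}$, $\overline z\to\overline{z_1}$) to the $q$-$\mathrm{P_{III_2}}$ equation with the stated parameters.

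For the bilinear equations I would degenerate \eqref{bilin-1III_1}, \eqref{bilin-4III_1} and \eqref{bilin-5III_1}. In each case: substitute (a), multiply through by $C_1C_2/\tilde s$, and evaluate the surviving coefficient of the $\tau_3$-bilinear middle term via (b). For \eqref{bilin-1III_1} the middle term $q^{-\theta_*}t^{1/2}\tau^{\mathrm{III_1}}_3\tau^{\mathrm{III_1}}_4$ yields $q^{-\theta_*}t_1^{1/2}\big(q^{\Lambda/2}C_1C_2/(C_3C_4)\big)\tau_3^2\to q^{-\theta_*}(q-1)^{-1/2}t_1^{1/2}\tau_3^2$, giving \eqref{bilin-1III_2}. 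For \eqref{bilin-4III_1}, $\tau^{\mathrm{III_1}}_3\tau^{\mathrm{III_1}}_4$ and $\overline{\tau^{\mathrm{III_1}}_3}\underline{\tau^{\mathrm{III_1}}_4}$ have the distinct limits $C_3^{-1}C_4^{-1}\tilde s\,\tau_3^2$ and $C_3^{-1}C_4^{-1}\tilde s\,\overline{\tau_3}\underline{\tau_3}$, and $q^{\theta_\star}t^{-1/2}=q^{\Lambda/2}t_1^{-1/2}$, so the middle pair limits to $(q-1)^{-1/2}t_1^{-1/2}(\overline{\tau_3}\underline{\tau_3}-\tau_3^2)$, giving \eqref{bilin-4III_2}; and for \eqref{bilin-5III_1} the middle term $q^{-1/4}t^{1/2}\tau^{\mathrm{III_1}}_3\underline{\tau^{\mathrm{III_1}}_4}$ yields $q^{-1/4}(q-1)^{-1/2}t_1^{1/2}\tau_3\underline{\tau_3}$, giving \eqref{bilin-5III_2}. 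As a cross-check, once \eqref{bilin-1III_2}--\eqref{bilin-5III_2} are available one can recover the first part by inserting the $(y,z)$ above into the $q$-$\mathrm{P_{III_2}}$ equation and reducing with these three relations, just as was done for $q$-$\mathrm{P_{III_1}}$.

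The argument has no single deep step; it is all bookkeeping of the $q^{\pm\Lambda/2}$, $\tilde s$ and shift factors. The one place that genuinely needs attention is checking that the degeneration does not collapse: that $q^{\Lambda/2}C_1C_2/(C_3C_4)$ has a finite nonzero limit, that the $\tilde s$'s cancel, and above all that the two monomials making up each ``middle term'' do not tend to the same limit. This last point explains why \eqref{bilin-4III_1} survives (its pair limits to $\tau_3^2$ versus $\overline{\tau_3}\underline{\tau_3}$, which differ) whereas the fourth $q$-$\mathrm{P_{III_1}}$ relation \eqref{bilin-7III_1}, whose middle monomials $\tau^{\mathrm{III_1}}_3\underline{\tau^{\mathrm{III_1}}_4}$ and $\underline{\tau^{\mathrm{III_1}}_3}\tau^{\mathrm{III_1}}_4$ both limit to $\tau_3\underline{\tau_3}$, has no clean $\Lambda\to\infty$ limit under this scaling — and, consistently, is not among the three bilinear equations of the statement.
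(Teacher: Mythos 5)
Your proposal is correct and follows essentially the same route as the paper, which itself only records the coefficient identity $C_1C_2=(q-1)^{-1/2}\big(q^{-\Lambda/2}-q^{\Lambda/2-\sigma}\big)C_3C_4$ and then states that the theorem is ``obtained by the degeneration'' of the $q$-$\mathrm{P_{III_1}}$ results. Your write-up supplies exactly the bookkeeping the paper leaves implicit (the cancellation of $\tilde s$, the finite limit of $q^{\Lambda/2}C_1C_2/(C_3C_4)$, the vanishing of $b_2t$, and the observation that the fourth $\mathrm{III_1}$ bilinear relation degenerates trivially and so does not survive), all of which checks out.
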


We note that the bilinear equations \eqref{bilin-1III_1}, \eqref{bilin-4III_1}, and \eqref{bilin-5III_1} for the tau functions of $q$-$\mathrm{P_{III_1}}$ degenerate to \eqref{bilin-1III_2}, \eqref{bilin-4III_2}, and \eqref{bilin-5III_2}, respectively.

\section[From $q$-$\mathrm{P_{{III}_2}}$ to $q$-$\mathrm{P_{{III}_3}}$]{From $\boldsymbol{q}$-$\boldsymbol{\mathrm{P_{{III}_2}}}$ to $\boldsymbol{q}$-$\boldsymbol{\mathrm{P_{{III}_3}}}$}\label{section6}

In this section, we take a limit of the tau functions of $q$-$\mathrm{P_{III_2}}$ to $q$-$\mathrm{P_{{III}_3}}$. Define the tau function by
\begin{gather*}
\tau^{\mathrm{III_3}}(s,\sigma,t)= \sum_{n\in\mathbb{Z}}s^n t^{(\sigma+n)^2}C_{\mathrm{{III}_3}}[\sigma+n] Z_{\mathrm{{III}_3}}[\sigma+n,t],
\end{gather*}
with
\begin{gather*}
C_{\mathrm{{III}_3}} [\sigma ]=(q-1)^{-4\sigma^2}\prod_{\varepsilon=\pm} \frac{1}{G_q(1+2\varepsilon (\sigma+n))},\\
Z_{\mathrm{{III_3}}}[\sigma,t]=\sum_{(\lambda_+,\lambda_-)\in\mathbb{Y}^2}t^{|\lambda_+|+|\lambda_-|}
\frac{1}{\prod\limits_{\varepsilon,\varepsilon'=\pm}N_{\lambda_\varepsilon,\lambda_{\varepsilon'}}\big(q^{(\varepsilon-\varepsilon')\sigma}\big)}.
\end{gather*}

Let us define the tau functions for $q$-$\mathrm{P_{III_3}}$ by
\begin{gather*}
\tau^{\mathrm{III_3}}_1=\tau^{\mathrm{III_3}}(s,\sigma,t),\qquad
\tau^{\mathrm{III_3}}_2=\tau^{\mathrm{III_3}}\big(s,\sigma+\tfrac{1}{2},t\big).
\end{gather*}
Put
\begin{gather*}
C_1=(q-1)^{-\sigma^2}q^{-(\Lambda-1/2)\sigma^2} \prod_{\varepsilon=\pm}G_q\big( \tfrac{3}{2}-\Lambda+\varepsilon\sigma\big)^{-1},\\
C_2=(q-1)^{-(\sigma+1)^2}q^{-(\Lambda+1/2)(\sigma+1)^2}\prod_{\varepsilon=\pm}G_q\big(\tfrac{1}{2}-\Lambda+\varepsilon(\sigma+1)\big)^{-1},\\
C_3=(q-1)^{-(\sigma+1/2)^2}q^{-\Lambda(\sigma+1/2)^2}\prod_{\varepsilon=\pm}G_q\big(1-\Lambda+\varepsilon\big(\sigma+\tfrac{1}{2}\big)\big)^{-1}.
\end{gather*}

\begin{Proposition}Set
\begin{gather*}
\theta_*=\Lambda,\qquad t = q^{\Lambda}t_1, \qquad s = \tilde s (q-1)^{-2\sigma}q^{-2\sigma\Lambda}
\prod_{\varepsilon=\pm}\Gamma_q\big(\tfrac{1}{2}-\Lambda+\varepsilon\sigma\big)^{-\varepsilon}.
\end{gather*}
Then we have
\begin{gather*}
C_1\tau_1^{\mathrm{III_2}}(\theta_*\,|\, s,\sigma,t) \to \tau_1^{\mathrm{III_3}}(\tilde s, \sigma, t_1),\\
C_2\tau_2^{\mathrm{III_2}}(\theta_*\,|\, s,\sigma,t)\to \tau_1^{\mathrm{III_3}}(\tilde s, \sigma, t_1)/\tilde s,\\
C_3\tau_3^{\mathrm{III_2}}(\theta_*\,|\, s,\sigma,t)\to \tau_2^{\mathrm{III_3}}(\tilde s, \sigma, t_1),
\end{gather*}
as $\Lambda\to\infty$.
\end{Proposition}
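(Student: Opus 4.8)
The plan is to repeat, \emph{mutatis mutandis}, the two-step argument used in the proof of Proposition~\ref{prop_qPV_tau} and of its counterparts in Sections~\ref{section4} and~\ref{section5}: first identify the limit of the partition sum $Z_{\mathrm{III_2}}$, and then the limit of the scalar factor $s^{n}C_{\mathrm{III_2}}[\,\cdot\,|\,\sigma+n]\,t^{(\sigma+n)^2}$ multiplying it, everything being understood coefficientwise in the formal expansions in $s$ (i.e.\ in the summation index $n$) and in $t_1$ (i.e.\ over pairs of partitions).

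For the series part I would use the elementary asymptotics
\begin{gather*}
 N_{\varnothing,\lambda}\big(q^{-\Lambda}u\big)q^{\Lambda|\lambda|}
 =\prod_{\square\in\lambda}\big(q^{\Lambda}-q^{\ell_\lambda(\square)+a_\varnothing(\square)+1}u\big)
 \longrightarrow f_\lambda\big(u^{-1}\big),\qquad \Lambda\to\infty,
\end{gather*}
already exploited in Section~\ref{section3}. Setting $\theta_*=\Lambda$, $t=q^{\Lambda}t_1$ in $Z_{\mathrm{III_2}}$ and absorbing the factor $q^{\Lambda(|\lambda_+|+|\lambda_-|)}$ coming from $t^{|\lambda_+|+|\lambda_-|}$ into the numerators $N_{\varnothing,\lambda_\varepsilon}\big(q^{-\Lambda-\varepsilon\sigma}\big)$, each such numerator tends to $f_{\lambda_\varepsilon}\big(q^{\varepsilon\sigma}\big)$, which cancels the Chern--Simons factor $f_{\lambda_\varepsilon}\big(q^{\varepsilon\sigma}\big)^{-1}$ exactly and leaves $Z_{\mathrm{III_3}}[\sigma,t_1]$. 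The same conclusion holds after the innocuous substitutions $\sigma\mapsto\sigma+n,\sigma+\tfrac12,\sigma+1$ and the rescalings $t\mapsto q^{\pm1/2}t$ that occur in $\tau_1^{\mathrm{III_2}}$, $\tau_2^{\mathrm{III_2}}$, $\tau_3^{\mathrm{III_2}}$, since $\Lambda\mp\tfrac12\to\infty$ as well.

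For the coefficients I would expand $C_i\,s^{n}\,C_{\mathrm{III_2}}[\,\cdot\,|\,\sigma+n]\,\big(q^{\pm1/2}t\big)^{(\sigma+n)^2}$ and apply~\eqref{eq_Gqn}, with $1-x=\tfrac32-\Lambda$ for $\tau_1^{\mathrm{III_2}},\tau_2^{\mathrm{III_2}}$ and $1-x=1-\Lambda$ for $\tau_3^{\mathrm{III_2}}$. The resulting ratio of $q$-Barnes functions with arguments shifted by $n$ produces a product of exactly $n^2$ $q$-numbers of the form $[\tfrac12-\Lambda+\ast]$, together with factors $\Gamma_q(\tfrac12-\Lambda+\varepsilon\sigma)^{\varepsilon n}$ that the prescribed scaling of $s$ is designed to cancel (for $\tau_2^{\mathrm{III_2}},\tau_3^{\mathrm{III_2}}$ these $\Gamma_q$-arguments match only up to an integer shift, which is reconciled by a few applications of the $q$-Gamma functional equation~\eqref{eq_Gamma_Barnes}). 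Since $|q|<1$ one has $[\tfrac12-\Lambda+\ast]\sim q^{1/2-\Lambda+\ast}/(q-1)$ as $\Lambda\to\infty$, so this product of $q$-numbers behaves like $(q-1)^{-n^2}q^{n^2(1/2-\Lambda)}q^{n\sigma}$, the $\sigma$- and $j$-dependent contributions of the two inner double products in~\eqref{eq_Gqn} cancelling pairwise exactly as in the proof of Proposition~\ref{prop_qPV_tau}. Collecting everything, all powers of $q$ --- those from $C_i$, from $t=q^{\Lambda}t_1$, from the scaling of $s$, and from this $q$-number product --- cancel; the residual powers of $(q-1)$ sum to $(q-1)^{-4(\sigma+n)^2}$; and the surviving $q$-Barnes factors are $\prod_{\varepsilon}G_q(1+2\varepsilon(\sigma+n))^{-1}$. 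Together with $s^{n}\to\tilde s^{\,n}$ and $\big(q^{\pm1/2}t\big)^{(\sigma+n)^2}\to t_1^{(\sigma+n)^2}$, comparison with the definitions of $C_{\mathrm{III_3}}$ and $\tau^{\mathrm{III_3}}$ proves the first limit. The remaining two are argued identically, with only the index bookkeeping differing: the summation variable of $\tau_2^{\mathrm{III_2}}$ is $\sigma+1+n$, so reindexing $n\mapsto n-1$ produces the overall factor $\tilde s^{-1}$, while the half-integer shift $\sigma\mapsto\sigma+\tfrac12$ in $\tau_3^{\mathrm{III_2}}$ survives the limit and yields $\tau_2^{\mathrm{III_3}}$.

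The main obstacle is purely organizational: carefully propagating the many $q^{\Lambda}$-, $(q-1)$-, $q$-Barnes- and $\Gamma_q$-factors through~\eqref{eq_Gqn} and verifying that the $\Lambda$-dependence cancels identically in all three cases. As in the earlier propositions the statement is read coefficientwise in the formal expansions, so no analytic convergence question beyond what is already assumed in the paper arises.
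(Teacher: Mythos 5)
Your proposal is correct and follows exactly the two-step scheme (limit of the series part via $N_{\varnothing,\lambda}\big(q^{-\Lambda}u\big)q^{\Lambda|\lambda|}\to f_\lambda\big(u^{-1}\big)$, then limit of the coefficients via~\eqref{eq_Gqn}) that the paper uses in Propositions~\ref{prop_qPV_tau} and its Section~\ref{section4} analogue; the paper states this particular proposition without written proof precisely because it is proved by that same routine. Your bookkeeping of the $(q-1)$-, $q^{\Lambda}$- and $\Gamma_q$-factors, the cancellation of the Chern--Simons terms, and the reindexing $n\mapsto n-1$ producing the $\tilde s^{-1}$ all check out.
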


In what follows, we abbreviate $\tau_i^{\mathrm{III_3}}( s,\sigma,t)$ to $\tau_i$.
Since we have the relation
\begin{gather*}
 C_1C_2=(q-1)^{1/2}\frac{q^{-\sigma-1/2+\Lambda/2}}{q^{-\sigma-1/2}-q^\Lambda}C_3^2,
\end{gather*}
we obtain the following theorem by the degeneration.
\begin{Theorem}
The functions
\begin{gather*}
y=t^{1/2}\frac{s\tau_2^2}{\tau_1^2},\qquad z=q^{-3/4}t^{1/2}\frac{s\tau_2\underline{\tau_2}}{\tau_1\underline{\tau_1}}
\end{gather*}
solves the $q$-Painlev\'e $\mathrm{III}_3$ equation
\begin{gather}
\frac{y \overline y}{a_3 }=\overline z^2 ,\qquad
z \overline z=-\frac{y(y-a_2 t)}{y-a_3}\label{eq_qPIII_3}
\end{gather}
with the parameters
\begin{gather*}
 a_2=q^{-1},\qquad a_3=q^{-1}.
\end{gather*}
 Furthermore, the tau functions $\tau_1$, $\tau_2$ satisfy the following bilinear equations.
\begin{gather}
 st^{1/2} \tau_2^2-\tau_1^2 +\overline{\tau_1}\underline{\tau_1}=0,\label{bilin-1III_3}\\
s^{-1}t^{1/2}\tau_1^2-\tau_2^2+\overline{\tau_2}\underline{\tau_2}=0.\label{bilin-4III_3}
\end{gather}
\end{Theorem}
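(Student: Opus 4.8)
\textbf{Proof proposal for the $q$-$\mathrm{P_{III_3}}$ theorem.}

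The plan is to obtain everything by taking the $\Lambda\to\infty$ degeneration of the corresponding statements for $q$-$\mathrm{P_{III_2}}$, exactly in the style of the earlier sections. First I would record the relation among the normalization constants stated just above the theorem, namely
\begin{gather*}
C_1C_2=(q-1)^{1/2}\frac{q^{-\sigma-1/2+\Lambda/2}}{q^{-\sigma-1/2}-q^{\Lambda}}C_3^2,
\end{gather*}
and use it together with the limits $C_1\tau_1^{\mathrm{III_2}}\to\tau_1$, $C_2\tau_2^{\mathrm{III_2}}\to\tau_1/\tilde s$, $C_3\tau_3^{\mathrm{III_2}}\to\tau_2$ from the preceding Proposition. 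Substituting $\theta_*=\Lambda$, $t=q^{\Lambda}t_1$ into the $q$-$\mathrm{P_{III_2}}$ expressions
\begin{gather*}
y=q^{-\theta_*-1}(q-1)^{-1/2}t^{1/2}\frac{\tau_3^2}{\tau_1\tau_2},\qquad
z=q^{-\theta_*/2-3/4}(q-1)^{-1/2}t^{1/2}\frac{\tau_3\underline{\tau_3}}{\underline{\tau_1}\tau_2},
\end{gather*}
one sees the powers of $q^{\Lambda}$ cancel (after replacing each $\tau_i^{\mathrm{III_2}}$ by $C_i^{-1}$ times its limit and using the $C$-relation), and an extra factor $\tilde s$ is produced by the asymmetric limit of $\tau_2^{\mathrm{III_2}}$; this reproduces $y=t_1^{1/2}\,s\tau_2^2/\tau_1^2$ and $z=q^{-3/4}t_1^{1/2}\,s\tau_2\underline{\tau_2}/(\tau_1\underline{\tau_1})$ in the limit. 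Then I would substitute $\theta_*=\Lambda$, $t=q^{\Lambda}t_1$ into the $q$-$\mathrm{P_{III_2}}$ equation itself and check that, with $y=y_1$ fixed and $z=q^{-\Lambda/2}z_1$ or whatever power is forced by the scaling, the system degenerates to \eqref{eq_qPIII_3} with $a_2=a_3=q^{-1}$; this is the same bookkeeping as in Theorems in Sections~\ref{section3}--\ref{section5}.

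For the bilinear equations I would degenerate \eqref{bilin-1III_2} and \eqref{bilin-4III_2}. Dividing \eqref{bilin-1III_2} by the appropriate product of $C$'s and letting $\Lambda\to\infty$: the term $\tau_1^{\mathrm{III_2}}\tau_2^{\mathrm{III_2}}$ scales like $C_1^{-1}C_2^{-1}\tau_1\cdot(\tau_1/\tilde s)$, the Chern--Simons-type term $q^{-\theta_*}(q-1)^{-1/2}t^{1/2}(\tau_3^{\mathrm{III_2}})^2$ scales like $q^{-\Lambda}(q-1)^{-1/2}q^{\Lambda/2}t_1^{1/2}C_3^{-2}\tau_2^2$, and the shifted term $\overline{\tau_1^{\mathrm{III_2}}}\,\underline{\tau_2^{\mathrm{III_2}}}$ scales like $\overline{C_1}^{-1}\underline{C_2}^{-1}\overline{\tau_1}\,\underline{\tau_1}/\underline{\tilde s}$. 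Using the $C$-relation and its shifted versions one checks the three surviving coefficients are proportional to $s t_1^{1/2}$, $1$, $1$ respectively, yielding $s t^{1/2}\tau_2^2-\tau_1^2+\overline{\tau_1}\underline{\tau_1}=0$, i.e.\ \eqref{bilin-1III_3}. The same computation applied to \eqref{bilin-4III_2} gives $s^{-1}t^{1/2}\tau_1^2-\tau_2^2+\overline{\tau_2}\underline{\tau_2}=0$, i.e.\ \eqref{bilin-4III_3}. (One should double-check whether \eqref{bilin-5III_2} degenerates to a consequence of these two or becomes trivial; in either case nothing new is needed.) Finally, as a cross-check, I would verify directly that $y,z$ defined in the theorem solve \eqref{eq_qPIII_3} using \eqref{bilin-1III_3}, \eqref{bilin-4III_3}: compute $y\overline y/a_3$ as $q\,t\,s\overline s\,\tau_2^2\overline{\tau_2}^2/(\tau_1^2\overline{\tau_1}^2)$ and $\overline z^2$ as $q^{-3/2}\overline t\,\overline s{}^{\,2}\overline{\tau_2}^2\tau_2^2/(\overline{\tau_1}^2\tau_1^2)$ and confirm $s\overline s q\,t = q^{-3/2}\overline t\,\overline s^{\,2}$ reduces to $\overline s = q^{1/2}s\cdot(t/\overline t)\cdot q^{\,?}$— i.e.\ pin down the shift rule for $s$ from the $t$-dependence $t^{(\sigma+n)^2}$, $\overline t = qt$; then $z\overline z$ and $-y(y-a_2t)/(y-a_3)$ are matched using \eqref{bilin-1III_3} and \eqref{bilin-4III_3} after clearing denominators.

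The main obstacle, as in the earlier sections, is purely the bookkeeping of the $q^{\Lambda}$-powers and the $\tilde s$-rescalings: one must track precisely which shifted normalization constants ($\overline{C_i}$, $\underline{C_i}$) appear, confirm that the relation $C_1C_2=(q-1)^{1/2}q^{-\sigma-1/2+\Lambda/2}C_3^2/(q^{-\sigma-1/2}-q^{\Lambda})$ and its $t\to qt$ and $t\to t/q$ translates combine to give exactly the stated constant coefficients in the limit, and check that the ``spurious'' denominators $q^{-\sigma-1/2}-q^{\Lambda}$ cancel against factors coming from the asymmetric limit of $\tau_2^{\mathrm{III_2}}$ (which carries the extra $1/\tilde s$). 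Once the scaling dictionary is fixed, all three parts---the $(y,z)$ formula, the $q$-$\mathrm{P_{III_3}}$ equation with $a_2=a_3=q^{-1}$, and the two bilinear relations---follow mechanically from the $q$-$\mathrm{P_{III_2}}$ statements, so no genuinely new idea is required beyond what Sections~\ref{section3}--\ref{section5} already supply.
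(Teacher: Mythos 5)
Your proposal is correct and coincides with the paper's own proof, which consists of nothing more than the relation $C_1C_2=(q-1)^{1/2}q^{-\sigma-1/2+\Lambda/2}\big(q^{-\sigma-1/2}-q^{\Lambda}\big)^{-1}C_3^2$ followed by the phrase ``we obtain the following theorem by the degeneration'': the $(y,z)$ formulas, the equation, and the two bilinear relations are all obtained exactly as you describe, by degenerating the $q$-$\mathrm{P_{III_2}}$ theorem through the limits $C_1\tau_1^{\mathrm{III_2}}\to\tau_1$, $C_2\tau_2^{\mathrm{III_2}}\to\tau_1/\tilde s$, $C_3\tau_3^{\mathrm{III_2}}\to\tau_2$. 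One caveat on the bookkeeping you defer as ``mechanical'': carrying it out with $\theta_*=\Lambda$, $t=q^{\Lambda}t_1$ and $q^{\Lambda}\to 0$ actually yields $y\to q^{-1}t_1^{1/2}\tilde s\,\tau_2^2/\tau_1^2$, an extra factor $q^{-1}$ relative to the printed formula, and it is this corrected $y$ (not the stated one) that is consistent with $y\overline y/a_3=\overline z^2$, $a_2=a_3=q^{-1}$, and the two bilinear equations --- so the cross-check you propose at the end would expose a typo in the theorem's $y$ rather than confirm it verbatim, but this reflects on the statement, not on your method.
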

We note that the bilinear equations \eqref{bilin-1III_2}, \eqref{bilin-4III_2} for the tau functions of~$q$-$\mathrm{P_{III_2}}$ degenerate to~\eqref{bilin-1III_3}, \eqref{bilin-4III_3}, respectively. As suggested in \cite[equations~(2.9)--(2.11)]{BS}, the bilinear equa\-tion~\eqref{bilin-4III_3} is derived from~\eqref{bilin-1III_3} by the transformation $\sigma\to \sigma+1/2$.

\begin{Remark}The tau function $\mathcal{T}_c\big(q^{2\sigma},s;q\,|\,t\big)$ proposed in~\cite{BS} for the $q$-Painlev\'e $\mathrm{III}_3$ equation are related to our tau functions by
\begin{gather*}
\mathcal{T}_c\big(q^{2\sigma},s;q\,|\,t\big)= (-1)^{-2\sigma^2} \tau^{\mathrm{III_3}}\big( (-1)^{-4\sigma}s, \sigma, t\big).
\end{gather*}
\end{Remark}

\begin{Remark}$q$-$P(A_7')$ in \cite{Mu} (or $q$-$P\big(A_1^{(1)}/A_7^{(1)}\big)$ in \cite[equation~(8.14)]{KNY}) is
\begin{gather*}
\frac{y\overline y}{a_4}=-\frac{\overline z(\overline z-b_2 t)}{\overline z-b_3},\qquad \frac{z\overline z}{b_3}=\frac{y^2}{a_4},
\end{gather*}
where $y=y(t)$, $z=z(t)$, and $a_4$, $b_1$, $b_2$, $b_3$ are complex parameters. Replacing $y$, $z$ in \eqref{eq_qPIII_3} by~$z$,~$\underline{y}$, we obtain $q$-$P(A_7')$ with $a_4=1$, $b_2=1$, and $b_3=q^{-1}$.
\end{Remark}

The bilinear equations \eqref{bilin-1III_3}, \eqref{bilin-4III_3} are also proved by using the Nakajima--Yoshioka blow-up equations~\cite{BS1}. There exists another $q$-difference equation admitting $\mathrm{P_{III_3}}$ and $\mathrm{P_I}$ as limits~\cite{GR}, which corresponds to the $q$-difference Painlev\'e equation of the surface type $A_7^{(1)}$ \cite{S}. Its standard form (see equation~(2.44) in \cite{S1}) is
\begin{gather}\label{eq_qPA7}
\overline{g}g^2\underline{g}=t^{2}(1-g),
\end{gather}
where $g=g(t)$. A series expansion of the tau function for $q$-$P\big(A_7^{(1)}\big)$ \eqref{eq_qPA7} was proposed and conjectured to satisfy its bilinear form in \cite{BGM}. Later, it was proved in~\cite{BS1}. Below, we show that their tau function for $q$-$P\big(A_7^{(1)}\big)$~\eqref{eq_qPA7} is also obtained as another limit of the tau function for $q$-$\mathrm{P_{III_2}}$.

Redefine the tau function by
\begin{gather*}
\tau^{\mathrm{III_3}}(s,\sigma,t)= \sum_{n\in\mathbb{Z}} s^n t^{(\sigma+n)^2} C_{\mathrm{{III}_3}} [\sigma+n ] Z_{\mathrm{{III}_3}}[\sigma+n,t],
\end{gather*}
with
\begin{gather*}
C_{\mathrm{{III}_3}} [\sigma ] = (-1)^{n^2}(q-1)^{-4\sigma^2}\prod_{\varepsilon=\pm} \frac{1}{G_q(1+2\varepsilon(\sigma+n))},\\
Z_{\mathrm{{III_3}}}[\sigma,t]=\sum_{(\lambda_+,\lambda_-)\in\mathbb{Y}^2}t^{|\lambda_+|+|\lambda_-|}
\frac{\prod\limits_{\varepsilon=\pm}f_{\lambda_\varepsilon}(q^{\varepsilon\sigma})^{-1}}
{\prod\limits_{\varepsilon,\varepsilon'=\pm}N_{\lambda_\varepsilon,\lambda_{\varepsilon'}}\big(q^{(\varepsilon-\varepsilon')\sigma}\big)}.
\end{gather*}

Let us define the tau functions for $q$-$P\big(A_7^{(1)}\big)$ by
\begin{gather*}
\tau^{\mathrm{III_3}}_1=\tau^{\mathrm{III_3}}\big(s,\sigma,t/\sqrt{q}\big),\qquad
\tau^{\mathrm{III_3}}_2=\tau^{\mathrm{III_3}}\big(s,\sigma+\tfrac{1}{2},t\big).
\end{gather*}
Put
\begin{gather*}
C_1= (q-1)^{-\sigma^2}\prod_{\varepsilon=\pm}G_q\big( \tfrac{3}{2}+\Lambda+\varepsilon\sigma\big)^{-1}, \\
C_2=(q-1)^{-(\sigma+1)^2}\prod_{\varepsilon=\pm}G_q\big(\tfrac{1}{2}+\Lambda+\varepsilon(\sigma+1)\big)^{-1},\\
C_3=(q-1)^{-(\sigma+1/2)^2}\prod_{\varepsilon=\pm}G_q\big(1+\Lambda+\varepsilon\big(\sigma+\tfrac{1}{2}\big)\big)^{-1}.
\end{gather*}

\begin{Proposition}Set
\begin{gather*}
\theta_*=-\Lambda, \qquad s = \tilde s (q-1)^{-2\sigma}\prod_{\varepsilon=\pm}\Gamma_q\big(\tfrac{1}{2}+\Lambda+\varepsilon\sigma\big)^{-\varepsilon}.
\end{gather*}
Then we have
\begin{gather*}
C_1\tau_1^{\mathrm{III_2}}(\theta_*\,|\, s,\sigma,t) \to \tau_1^{\mathrm{III_3}}(\tilde s, \sigma, t),\\
C_2\tau_2^{\mathrm{III_2}}(\theta_*\,|\, s,\sigma,t)\to \tau_1^{\mathrm{III_3}}(\tilde s, \sigma, q t)/\tilde s,\\
C_3\tau_3^{\mathrm{III_2}}(\theta_*\,|\, s,\sigma,t)\to \tau_2^{\mathrm{III_3}}(\tilde s, \sigma, t),
\end{gather*}
as $\Lambda\to\infty$.
\end{Proposition}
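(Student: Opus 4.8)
The plan is to follow exactly the same recipe used in the previous degeneration propositions (from $q$-$\mathrm{P_{III_1}}$ to $q$-$\mathrm{P_{III_2}}$ and from $q$-$\mathrm{P_{III_2}}$ to the first version of $q$-$\mathrm{P_{III_3}}$): first treat the series part $Z$, then the scalar prefactors $C_{\mathrm{III_2}}$, and finally reassemble the sum over $n$. For the series part, the key observation is the asymptotic relation
\begin{gather*}
N_{\varnothing,\lambda}\big(q^{-\Lambda}u\big)\,q^{\Lambda|\lambda|}=\prod_{\square\in\lambda}\big(q^{\Lambda}-q^{\ell_\lambda(\square)+a_\varnothing(\square)+1}u\big)\to f_{\lambda}\big(u^{-1}\big),\qquad \Lambda\to\infty,
\end{gather*}
already recorded in the proof of Proposition~\ref{prop_qPV_tau}. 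Setting $\theta_*=-\Lambda$ and $t=q^{-\Lambda}\cdot(q^{\Lambda}t)$ (no rescaling of $t$ is needed here since, unlike the earlier steps, the exponent $\Lambda$ now enters through $\theta_*$ alone and the factor $q^{\Lambda|\lambda|}$ is absorbed by the $t^{|\lambda_+|+|\lambda_-|}$ power together with the stated substitution), one checks that each factor $N_{\varnothing,\lambda_\varepsilon}(q^{-\theta_*-\varepsilon\sigma})=N_{\varnothing,\lambda_\varepsilon}(q^{\Lambda-\varepsilon\sigma})$ contributes, after the appropriate power of $q^{\Lambda}$ is pulled out, the limit $f_{\lambda_\varepsilon}(q^{\varepsilon\sigma})^{-1}$; hence $Z_{\mathrm{III_2}}[\theta_*\,|\,\sigma,t]$ tends to the redefined $Z_{\mathrm{III_3}}[\sigma,t]$ containing $\prod_\varepsilon f_{\lambda_\varepsilon}(q^{\varepsilon\sigma})^{-1}$ in the numerator. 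One must be careful to track the precise power of $q^{\Lambda}$ produced: in $\tau_2^{\mathrm{III_2}}$ the argument is $\sqrt{q}\,t$ and $\sigma$ is shifted to $\sigma+1$, which is what produces the extra shift $t\to qt$ and the factor $1/\tilde s$ in the middle line of the Proposition.

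Next I would handle the $C$-prefactors. The definition $C_{\mathrm{III_2}}[\theta_*\,|\,\sigma]=(q-1)^{-3\sigma^2}\prod_{\varepsilon}G_q(1-\theta_*+\varepsilon\sigma)/G_q(1+2\varepsilon\sigma)$ means that with $\theta_*=-\Lambda$ the problematic factors are $G_q(1+\Lambda+\varepsilon\sigma)$, which blow up; these are exactly cancelled by the chosen $C_i$ (each $C_i$ carries a compensating $\prod_\varepsilon G_q(\cdots+\Lambda+\cdots)^{-1}$), leaving a finite limit. To combine the prefactor with the $s^n$-dependence one uses identity \eqref{eq_Gqn} (the shift formula for $\prod_\varepsilon G_q(1-x+\varepsilon(\sigma+n))$) together with the stated substitution $s=\tilde s(q-1)^{-2\sigma}\prod_\varepsilon\Gamma_q(\tfrac12+\Lambda+\varepsilon\sigma)^{-\varepsilon}$; the powers of $q^{\Lambda}$, of $(q-1)$, and the $\Gamma_q$-ratios conspire — just as in the proof of Proposition~\ref{prop_qPV_tau} — so that $C_i\,s^n\,C_{\mathrm{III_2}}[\theta_*\,|\,\sigma+n]\,t^{(\sigma+n)^2}$ converges to $\tilde s^{\,n'}\,C_{\mathrm{III_3}}[\sigma'+n']\,t'^{(\sigma'+n')^2}$ with the $(-1)^{n^2}$ sign of the redefined $C_{\mathrm{III_3}}$ appearing from the $(-1)^{n(n+1)/2}$-type factors in \eqref{eq_Gqn}. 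Finally, one must justify interchanging the limit $\Lambda\to\infty$ with the sum over $n\in\mathbb{Z}$ and with the sum over $(\lambda_+,\lambda_-)\in\mathbb{Y}^2$; as elsewhere in the paper this is done termwise (formally), so no separate convergence argument is required beyond what is already implicit in the preceding sections.

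The main obstacle — or at least the only place where real care is needed — is the bookkeeping of exponents: one has to verify that all the powers of $q^{\Lambda}$ coming from (a) the factors $q^{\Lambda|\lambda_\varepsilon|}$ extracted from the Nekrasov factors, (b) the $t=q^{\Lambda}t_1$-type rescaling absent here but replaced by the $\theta_*=-\Lambda$ substitution inside $t^{(\sigma+n)^2}$ via $C_i$, and (c) the $\Gamma_q(\tfrac12+\Lambda+\varepsilon\sigma)$ factors in both $s$ and $C_i$, cancel exactly, and simultaneously that the leftover finite constants reproduce the redefined $C_{\mathrm{III_3}}[\sigma]$ with its $(-1)^{n^2}$. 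Since each of the three target tau functions $\tau_i^{\mathrm{III_3}}$ involves a different shift of $\sigma$ (by $0$, $+1$, $+\tfrac12$) and a different argument ($t$, $qt$, $t$), these cancellations must be checked separately for $i=1,2,3$; this is entirely parallel to the computation done for the first $q$-$\mathrm{P_{III_3}}$ limit and I would present it by indicating the one or two key identities and then asserting that the remaining manipulations are the same as before.
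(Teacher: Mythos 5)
The paper states this proposition without proof (it is left as analogous to the explicit computations in Propositions~\ref{prop_qPV_tau} and the $q$-$\mathrm{P_{III_1}}$ step), and your overall plan --- termwise limit of the series part, then the prefactor bookkeeping via \eqref{eq_Gqn} and the $\Gamma_q$-ratios hidden in $s$ --- is indeed the strategy those explicit proofs use. However, your treatment of the series part is wrong precisely at the one point where this limit differs from all the preceding ones. With $\theta_*=-\Lambda$ the argument of the Nekrasov factor is $q^{-\theta_*-\varepsilon\sigma}=q^{\Lambda-\varepsilon\sigma}$, which tends to $0$ since $|q|<1$; hence
\begin{gather*}
N_{\varnothing,\lambda_\varepsilon}\big(q^{\Lambda-\varepsilon\sigma}\big)=\prod_{\square\in\lambda_\varepsilon}\big(1-q^{\ell_{\lambda_\varepsilon}(\square)+a_\varnothing(\square)+1}\,q^{\Lambda-\varepsilon\sigma}\big)\longrightarrow 1
\end{gather*}
with no power of $q^{\Lambda}$ extracted at all, and the factors $f_{\lambda_\varepsilon}(q^{\varepsilon\sigma})^{-1}$ in the redefined $Z_{\mathrm{III_3}}$ are simply \emph{inherited} from $Z_{\mathrm{III_2}}$, not produced by this limit. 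Your claim that each $N_{\varnothing,\lambda_\varepsilon}(q^{\Lambda-\varepsilon\sigma})$ contributes $f_{\lambda_\varepsilon}(q^{\varepsilon\sigma})^{-1}$ after pulling out a power of $q^{\Lambda}$ is false (in this paper the renormalized $f^{\pm1}$ limits always arise from a \emph{divergent} argument $q^{-\Lambda}u$, never a vanishing one), and it double-counts: since $Z_{\mathrm{III_2}}$ already carries $f_{\lambda_\varepsilon}(q^{\varepsilon\sigma})^{-1}$, your accounting would land on $f^{-2}$ in the numerator. Your parenthetical is also internally inconsistent --- you assert that no rescaling of $t$ is needed and simultaneously that "$q^{\Lambda|\lambda|}$ is absorbed by the $t^{|\lambda_+|+|\lambda_-|}$ power together with the stated substitution", but the substitution puts no $q^{\Lambda}$ into $t$, so there is nothing to absorb such a factor; carried out literally, your computation leaves an uncompensated $q^{\pm\Lambda(|\lambda_+|+|\lambda_-|)}$.

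A secondary inaccuracy: the sign $(-1)^{n^2}$ in the redefined $C_{\mathrm{III_3}}$ does not come from "$(-1)^{n(n+1)/2}$-type factors in \eqref{eq_Gqn}" --- that identity contains no such signs. It comes from the $n^2$ factors of the form $[\Lambda+c]$ produced by applying \eqref{eq_Gqn} to $\prod_{\varepsilon}G_q(1+\Lambda+\varepsilon(\sigma+n))$: each tends to $1/(1-q)=-(q-1)^{-1}$, whereas in the earlier limits the arguments went to $-\infty$ and the corresponding factors, after multiplication by $q^{\Lambda}$, gave positive powers of $(q-1)^{-1}$ with no sign. The remainder of your outline (cancellation of the divergent $G_q(1+\Lambda+\varepsilon\sigma)$ against the $C_i$, the reindexing $n\to n-1$ producing the $1/\tilde s$ and the shift $t\to qt$ for $\tau_2^{\mathrm{III_2}}$, and the termwise interchange of limits) is consistent with the paper's explicit analogues, but the series-part step must be repaired as above before the proof is sound.
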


In what follows, we abbreviate $\tau_i^{\mathrm{III_3}}( s,\sigma,t)$ to $\tau_i$. Since we have the relation
\begin{gather*}
 C_1C_2=\frac{(q-1)^{1/2}}{1-q^{\Lambda-\sigma+1/2}}C_3^2,
\end{gather*}
we obtain the following theorem by the degeneration.
\begin{Theorem}
The functions
\begin{gather*}
y=-q^{-1}t^{1/2}\frac{s\tau_2^2}{\tau_1\overline{\tau_1}},\qquad
z=-q^{-3/4}t^{1/2}\frac{s\tau_2\underline{\tau_2}}{\underline{\tau_1}\overline{\tau_1}}
\end{gather*}
solves
\begin{gather}
y \overline y=-q^{-3/2}\frac{\overline z^2}{\overline z-q^{-1/2}} ,\qquad z \overline z=y(qy- t).\label{eq_qPIII_3_2}
\end{gather}
 Furthermore, the tau functions $\tau_1$, $\tau_2$ satisfy the following bilinear equations.
\begin{gather}
 s^{-1}t^{1/2} \tau_1\overline{\tau_1}-\tau_2^2 +\overline{\tau_2}\underline{\tau_2}=0,\label{bilin-4III_3_2}\\
 \tau_1^2-sq^{-1/4}t^{1/2}\tau_2\underline{\tau_2}-\underline{\tau_1}\overline{\tau_1}=0.\label{bilin-5III_3_2}
\end{gather}
\end{Theorem}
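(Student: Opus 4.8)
The plan is to derive the entire statement by degeneration from the $q$-$\mathrm{P_{III_2}}$ Theorem of Section~\ref{section5}, exactly as in Sections~\ref{section3}--\ref{section5}. The three ingredients are the preceding Proposition, which identifies the limits $\lim_{\Lambda\to\infty}C_i\tau_i^{\mathrm{III_2}}$ (at appropriately shifted arguments) with $\tau_1$ and $\tau_2$; the stated identity $C_1C_2=\frac{(q-1)^{1/2}}{1-q^{\Lambda-\sigma+1/2}}C_3^2$; and the elementary fact that $q^{\Lambda-\sigma+1/2}\to0$ as $\Lambda\to\infty$, since $|q|<1$, so that $C_1C_2/C_3^2\to(q-1)^{1/2}$. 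First I would set $\theta_*=-\Lambda$ in the $q$-$\mathrm{P_{III_2}}$ equation, so that its parameters become $a_2=q^{\Lambda-1}$, $a_3=q^{-1}$, $a_4=q^{3\Lambda/2-1/2}$, $b_3=q^{\Lambda/2-1/2}$, and in the $q$-$\mathrm{P_{III_2}}$ tau-form of $(y,z)$.

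For the solution part, I would show that the rescaled pair $\big(q^{-\Lambda}y^{\mathrm{III_2}},q^{-\Lambda/2}z^{\mathrm{III_2}}\big)$ converges as $\Lambda\to\infty$ to the functions $(y,z)$ in the statement. This amounts to substituting the Proposition's limits for $C_1\tau_1^{\mathrm{III_2}}$, $C_2\tau_2^{\mathrm{III_2}}$, $C_3\tau_3^{\mathrm{III_2}}$ (and the same with a $t/q$-shift), together with $C_1C_2/C_3^2\to(q-1)^{1/2}$, into the explicit ratios $\big(\tau_3^{\mathrm{III_2}}\big)^2/\big(\tau_1^{\mathrm{III_2}}\tau_2^{\mathrm{III_2}}\big)$ and $\tau_3^{\mathrm{III_2}}\underline{\tau_3^{\mathrm{III_2}}}/\big(\underline{\tau_1^{\mathrm{III_2}}}\tau_2^{\mathrm{III_2}}\big)$ appearing in the $q$-$\mathrm{P_{III_2}}$ formulas, and collecting the explicit prefactors. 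Substituting $y^{\mathrm{III_2}}=q^{\Lambda}y$ and $z^{\mathrm{III_2}}=q^{\Lambda/2}z$ into the two $q$-$\mathrm{P_{III_2}}$ relations, the diverging powers of $q^{\Lambda}$ on the two sides cancel in the limit — using $q^{\Lambda}y-q^{-1}\to-q^{-1}$ and $y^{\mathrm{III_2}}-a_2t=q^{\Lambda-1}(qy-t)$ — and one lands on \eqref{eq_qPIII_3_2}.

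For the bilinear equations I would multiply \eqref{bilin-4III_2} and \eqref{bilin-5III_2} by $C_1C_2$, use the identity above to rewrite every $\tau_3^{\mathrm{III_2}}$-factor through $C_3$, and pass to the limit; this should yield \eqref{bilin-4III_3_2} and \eqref{bilin-5III_3_2} respectively. As an independent derivation of \eqref{bilin-5III_3_2} (and a useful cross-check), substituting the tau-forms of $(y,z)$ into the first relation of \eqref{eq_qPIII_3_2} and clearing denominators gives $\overline{\tau_1}^2-q^{1/4}t^{1/2}s\,\tau_2\overline{\tau_2}-\tau_1\overline{\overline{\tau_1}}=0$, which is exactly \eqref{bilin-5III_3_2} after the shift $t\mapsto t/q$. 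Finally, \eqref{bilin-4III_3_2} and \eqref{bilin-5III_3_2} are exchanged by $\sigma\mapsto\sigma+\tfrac{1}{2}$ via the quasi-periodicity $\tau^{\mathrm{III_3}}(s,\sigma+1,t)=-s^{-1}\tau^{\mathrm{III_3}}(s,\sigma,t)$ — the extra minus sign coming from the $(-1)^{n^2}$ factor in the redefined tau function — which parallels the remark for the $A_7^{(1)\prime}$ equation.

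The step I expect to be the main obstacle is keeping the signs and $q$-powers consistent all the way through the limit. The redefined $\tau^{\mathrm{III_3}}$ carries the extra $(-1)^{n^2}$ in the $n$-summation, so it depends on the parity of the summation index and not only on $\sigma+n$; the rescaling of the spectral parameter $s$ is built from ratios of $q$-Gamma functions at large argument, whose limit must be read off carefully; and the $q$-$\mathrm{P_{III_2}}$ relations are multiplied termwise by diverging powers of $q^{\Lambda}$. Checking that all of these conspire so that the limiting $(y,z)$ and the bilinear equations come out with exactly the normalizations in the statement — rather than off by a sign or a power of $q$ — is where the care lies; the remaining manipulations are routine.
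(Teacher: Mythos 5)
Your proposal follows the same route as the paper, which proves this theorem purely ``by the degeneration'': it invokes the preceding Proposition for the limits $C_i\tau_i^{\mathrm{III_2}}\to\tau_j^{\mathrm{III_3}}$, the relation $C_1C_2=\tfrac{(q-1)^{1/2}}{1-q^{\Lambda-\sigma+1/2}}C_3^2$, and the degeneration of the bilinear equations \eqref{bilin-4III_2}, \eqref{bilin-5III_2} to \eqref{bilin-4III_3_2}, \eqref{bilin-5III_3_2}, exactly as you describe. Your additional cross-check (substituting the tau-forms into the first relation of \eqref{eq_qPIII_3_2} to recover \eqref{bilin-5III_3_2} up to the shift $t\mapsto qt$) is correct and consistent with the paper's analogous verification in the $\mathrm{III_1}$ case.
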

We note that the bilinear equations \eqref{bilin-4III_2}, \eqref{bilin-5III_2} for the tau functions of $q$-$\mathrm{P_{III_2}}$ degenerate to \eqref{bilin-4III_3_2}, \eqref{bilin-5III_3_2}, respectively. By the change of variables $t\to \sqrt{q}t$, $\sigma\to\sigma+1/2$, the bilinear equation~\eqref{bilin-5III_3_2} transforms~\eqref{bilin-4III_3_2}. The bilinear equation~\eqref{bilin-4III_3_2} is equivalent to the bilinear equation~(4.20) for $N=2$, $m=1$ in~\cite{BS1}, which is for
$q$-$P\big(A_7^{(1)}\big)$. Following \cite[Example~3.5]{BGM}, we take a~time evolution $T$ as $T(f(\sigma,t))=f(\sigma+1/2,\sqrt{q} t)$. Then the bilinear equation \eqref{bilin-5III_3_2} is equivalent to
\begin{gather*}
\tau^2-t^{1/2}\overline{\tau}\underline{\tau}-\overline{\overline{\tau}} \underline{\underline{\tau}}=0,
\end{gather*}
where $\tau=\tau^{\mathrm{III_3}}(s,\sigma,t)$, $\overline{\tau}=T(\tau)$, $\underline{\tau}=T^{-1}(\tau)$. Let $g=t^{1/2}\overline{\tau}\underline{\tau}\tau^{-2}$, then $g$ satisfies $q$-$P\big(A_7^{(1)}\big)$~\eqref{eq_qPA7}.

\subsection*{Acknowledgements}

This work is partially supported by JSPS KAKENHI Grant Number JP15K17560. The authors thank the referees for valuable suggestions and comments.

\pdfbookmark[1]{References}{ref}
\LastPageEnding

\end{document}